\newcommand{\dhcjoin}{S-DHC\xspace}
\theoremstyle{definition}
\newtheorem{definition}{Definition}
\theoremstyle{theorem}
\newtheorem{theorem}{Theorem}
\newtheorem{lemma}{Lemma}
\newtheorem{corollary}{Corollary}
\newtheorem{proposition}{Proposition}
\newcommand{\D}{\mathcal{D}}
\newcommand{\eat}[1]{}
\title{PHSafe: Disclosure Avoidance for the 2020 Census Supplemental Demographic and Housing Characteristics File (S-DHC)}
\author[1]{William Sexton}
\author[1]{Skye Berghel}
\author[1]{Bayard Carlson}
\author[1]{Sam Haney}
\author[1]{Luke Hartman}
\author[1]{Michael Hay}
\author[1]{Ashwin Machanavajjhala}
\author[1]{Gerome Miklau}
\author[1]{Amritha Pai}
\author[1]{Simran Rajpal}
\author[1]{David Pujol}
\author[1]{Ruchit Shrestha}
\author[1]{Daniel Simmons-Marengo}
\affil[1]{Tumult Labs}
\date{May 28, 2024}
\begin{document}

\maketitle
\begin{abstract}
This article describes the disclosure avoidance algorithm that the U.S. Census Bureau used to protect the 2020 Census Supplemental Demographic and Housing Characteristics File (\dhcjoin). The tabulations contain statistics of counts of U.S. persons living in certain types of households, including averages. The article describes the PHSafe algorithm, which is based on adding noise drawn from a discrete Gaussian distribution to the statistics of interest. We prove that the algorithm satisfies a well-studied variant of differential privacy, called zero-concentrated differential privacy. We then describe how the algorithm was implemented on Tumult Analytics and briefly outline the parameterization and tuning of the algorithm.  
\end{abstract}

\newpage
\tableofcontents
\newpage

\section{Introduction}
The U.S. Census Bureau (Census Bureau) conducts a census of the U.S. population every 10 years. The 2020 Census is the most recent such undertaking. As part of the 2020 Census, the Census Bureau manages the collection, processing, and publication of census data to the U.S. people. While handling these data, the Census Bureau is obligated to preserve the confidentiality of census respondents. To achieve this goal, the Census Bureau utilizes a Disclosure Avoidance System (DAS) that incorporates privacy algorithms into the data processing and dissemination procedures of the 2020 Census. For the 2020 Census, the DAS was rebuilt to improve its privacy protection mechanisms. The 2020 DAS adheres to a privacy framework known as \emph{differential privacy}. This framework admits many different variants or instances of privacy definitions. The most well-known instance of the framework is called ``pure'' differential privacy. However, in this paper, we assume that the term ``differential privacy'' refers to the variant known as zero-concentrated differential privacy, unless otherwise specified.  Data publication mechanisms that fit the definition of any instance of the differential privacy framework achieve mathematically provable guarantees about the privacy loss incurred by data publication.  

The DAS leverages a slate of differential privacy mechanisms to protect the Census Bureau's various data products for the 2020 Census. These data products include the 2020 Census Redistricting Data (P.L. 94-171) Summary File, Demographic and Housing Characteristics File (DHC), Demographic Profile, Detailed Demographic and Housing Characteristics File A (Detailed DHC-A), Detailed Demographic and Housing Characteristics File B (Detailed DHC-B), and Supplemental Demographic and Housing Characteristics File (S-DHC). Since each data product has its own challenges regarding confidentiality protection, the DAS customizes its privacy algorithms to optimize protection and accuracy on a product-by-product basis. Some closely related data products use the same algorithm. 

The focus of this paper is PHSafe, an algorithm designed specifically to provide differential privacy guarantees for the production and release of the S-DHC.  

There are three main goals of this article:

\begin{enumerate}
    \item Describe the PHSafe algorithm and how it meets the requirements of the S-DHC.
    \item Prove the privacy properties of the PHSafe algorithm.
    \item Describe the parameters in the PHSafe algorithm and how they impact privacy-accuracy trade-offs.
\end{enumerate}

The paper is organized as follows. Section \ref{sec:problem} covers a full description of the data product and privacy release problem. Section \ref{sec:phsafe-algorithm} provides a pseudocode description of PHSafe that details how privacy protection is applied to create the S-DHC tabular summaries. Sections \ref{sec:privacy-prelim} and \ref{sec:privacy-analysis} provide relevant background material on the differential privacy framework and explain how PHSafe adheres to the framework, respectively. In Section \ref{sec:implementation}, we discuss the differences between our pseudocode abstraction and the programmed codebase of the algorithm. Finally, Section \ref{sec:params} describes the parameters in PHSafe that impact the privacy-accuracy trade-offs of the algorithm. 

\section{Problem Setup}\label{sec:problem}
The S-DHC includes counts of people in the United States and Puerto Rico living in certain types of households, including averages. It includes eight published tables--six of which are repeated by major race and ethnicity. A common feature in most of these tabulations is that statistics are generated after \textit{joining} person characteristics (e.g., age) with properties of the households they live in (e.g., family type of the household). In this section, we define relevant concepts, outline the statistics to be released, and then formulate the differentially private algorithm design problem. 

\subsection{Geography}
A geographic summary level is a set of geographic areas with nonoverlapping boundaries, such as the set of all states. The S-DHC contains statistics for two geographic summary levels:

\begin{itemize}
    \item Nation
    \item State or State equivalent
\end{itemize}

Washington, D.C. is an example of a State equivalent. However, we tend to omit the ``or State equivalent'' qualifier in the remainder of this document. Throughout this paper, we adopt the convention of capitalizing levels and using lowercase for entities within a level (e.g., the state of Maine is included in the State summary level).

\subsection{Race and Ethnicity}
\label{sec:race-defs}

Every person is associated with one or more \textit{major} race categories: White, Black or African American, American Indian or Alaska Native, Asian, Native Hawaiian or Other Pacific Islander, and Some Other Race. Every person also has a binary ethnicity indicator (Hispanic or Latino, Not Hispanic or Latino). 

In the collected 2020 Census data, every household has exactly one person that is designated as its householder. Every household is associated with the major races and ethnicity of its householder.

Six of the S-DHC tables are repeated by the following race and ethnicity iterations: 

\begin{itemize}
    \item[(*)] Total population
    \item[(A)] White alone
    \item[(B)] Black or African American alone
    \item[(C)] American Indian and Alaska Native alone
    \item[(D)] Asian alone
    \item[(E)] Native Hawaiian and Other Pacific Islander alone
    \item[(F)] Some Other Race alone
    \item[(G)] Two or more major races
    \item[(H)] Hispanic or Latino
    \item[(I)] White alone, not Hispanic or Latino
\end{itemize}

These iterations give rise to three distinct iteration levels for which the S-DHC statistics may be released.

\begin{itemize}
    \item \textit{Unattributed}: These statistics represent the total population; they are not constrained by either race or ethnicity. This level only has one iteration, denoted as the * iteration in the above list.  
    \item \textit{A-G}: These statistics are categorized by exactly one of the major race iterations denoted by A, B, C, D, E, F, or G. No constraint is placed on ethnicity.  
    \item \textit{H-I}: These statistics are categorized by either the H iteration or the I iteration. H is constrained by ethnicity and I is constrained both race and ethnicity.
\end{itemize}

Depending on the specific statistic, the iterations may be applied to either individuals or households (i.e., to the householder). For example, one statistic may count the number of people living in households who are under 18 years old and White alone, while another statistic may count the number of people under 18 years old that are living in a household where the race of the householder is White alone.  

Regardless of whether the iteration is applied to each individual or to the household, the categories within a level are mutually exclusive. For example, in the A-G level, an individual or household cannot simultaneously be classified as White alone (A) and Black or African American alone (B). However, individuals or households can be classified in multiple iterations across levels. For example, an individual or household can be classified as Unattributed (*), Asian alone (D), and Hispanic or Latino (H).

\subsection{Population Groups}\label{sec:population-group-levels}

A \textit{population group} is a pair $(g, c)$, where $g$ is a geographic entity (e.g., the state of Wisconsin or the nation) and $c$ is a race or ethnicity iteration (e.g., Asian alone). Population groups are divided into \textit{population group levels}. We often identify a population group level by specifying a (geography level, characteristic iteration level) pair. Each population group level is a set of population groups, where the population group's geographic entity belongs to the specified geography level and its characteristic iteration belongs to the specified characteristic iteration level. More formally, the S-DHC requires the publication of statistics for the following population group levels:

\begin{itemize}
    \item (Nation, Unattributed) $\equiv$ $\{(g, c): g \text{ is the nation}, c \text{ is the * iteration}\}$
    \item (Nation, A-G) $\equiv$ $\{(g, c): g \text{ is the nation}, c \in \{A, B, C, D, E, F, G\}\}$
    \item (Nation, H-I) $\equiv$ $\{(g, c): g \text{ is the nation)}, c \in \{H, I\}\}$
    \item (State, Unattributed) $\equiv$ $\{(g, c): g \text{ is a state}, c \text{ is the * iteration}\}$
    \item (State, A-G) $\equiv$ $\{(g, c): g \text{ is a state}, c \in \{A, B, C, D, E, F, G\}\}$
    \item (State, H-I) $\equiv$ $\{(g, c): g \text{ is a state}, c \in \{H, I\}\}$
\end{itemize}

A person or household is associated with at most one population group in the set that comprises a population group level. For example, a household in Texas may be associated with H or I but not both, since a householder cannot be both Hispanic or Latino and not Hispanic or Latino. Assuming the householder is Hispanic or Latino, the household would be associated with the (Texas, H) population group in the (State, H-I) population group level. It is also possible for a household to be associated with neither H nor I if, for example, the householder was Native Hawaiian and Other Pacific Islander alone and not Hispanic or Latino. 

One person or household may belong to population groups across multiple population group levels. For example, a Californian household with a Native Hawaiian and Other Pacific Islander alone, not Hispanic or Latino householder would be associated with the (nation, *), (nation, E), (CA, *), and (CA, E) population groups within the levels (Nation, Unattributed), (Nation, A-G), (State, Unattributed), and (State, A-G), respectively.

\subsection{Supplemental Demographic and Housing Characteristics File}

The \dhcjoin aims to tabulate statistics by population groups. 
The goal is to release the following eight tables:

\begin{itemize}
    \item[(PH1):] Average Household Size by Age (Table \ref{tab:ph1-shell}).
    \item[(PH2):] Household Type for the Population in Households (Table \ref{tab:ph2-shell}).
    \item[(PH3):] Household Type by Relationship for the Population Under 18 Years (Table \ref{tab:ph3-shell}).
    \item[(PH4):] Population in Families by Age (Table \ref{tab:ph4-shell}).
    \item[(PH5):] Average Family Size by Age (Table \ref{tab:ph5-shell}). 
    \item[(PH6):] Family Type and Age for Own Children Under 18 Years (Table \ref{tab:ph6-shell}).
    \item[(PH7):] Total Population in Occupied Housing Units by Tenure (Table \ref{tab:ph7-shell}).
    \item[(PH8):] Average Household Size of Occupied Housing Units by Tenure (Table \ref{tab:ph8-shell}).
\end{itemize}

PH1, PH3, PH4, PH5, PH7 and PH8 are released for all population group levels whereas PH2 and PH6 are only released for (Nation, Unattributed) and (State, Unattributed). PH3 iterates by the race and ethnicity of each person, whereas PH1, PH4, PH5, PH7 and PH8 iterate by the race and ethnicity of the householder (or household).

Each table has a \emph{basis}, a set of fine-grained, disaggregated table cells from which the remaining cells can be generated. In the below table shells, the dark text indicates which cells form the table's basis, while the light text shows the aggregated table cells. For a given table, every in-universe person can be assigned to exactly one category of the table's basis. For example, with PH4, every person in a family is exclusively either under 18 years or 18 years and over.

\begin{table}[H]
\begin{tabular}{l}
\textbf{PH1: Average Household Size by Age}\\
\textit{Universe: Households.}\\
\textcolor{lightgray}{Total:}\\
\quad \quad Under 18 years\\
\quad \quad 18 years and over\\
\end{tabular}
\caption{\label{tab:ph1-shell}This table contains average household size by age.}
\end{table}

\begin{table}[H]
\begin{tabular}{l}
\textbf{PH2: Household Type for the Population in Households}\\
\textit{Universe: Population in households.}\\
\textcolor{lightgray}{Total:}\\
\quad \quad \textcolor{lightgray}{In married couple household:}\\
\quad \quad \quad \quad Opposite-sex married couple\\
\quad \quad \quad \quad Same-sex married couple\\
\quad \quad \textcolor{lightgray}{In cohabiting couple family:}\\
\quad \quad \quad \quad Opposite-sex cohabiting couple\\
\quad \quad \quad \quad Same-sex cohabiting couple\\
\quad \quad \textcolor{lightgray}{Male householder, no spouse or partner present:}\\
\quad \quad \quad \quad Living alone\\
\quad \quad \quad \quad Living with others\\
\quad \quad \textcolor{lightgray}{Female householder, no spouse or partner present:}\\
\quad \quad \quad \quad Living alone\\
\quad \quad \quad \quad Living with others\\
\end{tabular}
\caption{\label{tab:ph2-shell}This table contains household population counts based on the marital and cohabitation status of the householder.}
\end{table}

\begin{table}[H]
\begin{tabular}{l}
\textbf{PH3: Household Type by Relationship for the Population Under 18 Years}\\
\textit{Universe: Population in households under 18 years.}\\
\textcolor{lightgray}{Total:}\\
\quad \quad Householder, spouse, unmarried partner, or nonrelative\\
\quad \quad \textcolor{lightgray}{Own child:}\\
\quad \quad \quad \quad In married couple family\\
\quad \quad \quad \quad In cohabiting couple family\\
\quad \quad \quad \quad In male householder, no spouse or partner present family\\
\quad \quad \quad \quad In female householder, no spouse or partner present family\\
\quad \quad \textcolor{lightgray}{Other relatives:}\\
\quad \quad \quad \quad Grandchild\\
\quad \quad \quad \quad Other relatives\\
\end{tabular}
\caption{\label{tab:ph3-shell}This table contains counts for children under 18 years based on their household type.}
\end{table}

\begin{table}[H]
\begin{tabular}{l}
\textbf{PH4: Population in Families by Age}\\
\textit{Universe: Population in families.}\\
\textcolor{lightgray}{Total:}\\
\quad \quad Under 18 years\\
\quad \quad 18 years and over\\
\end{tabular}
\caption{\label{tab:ph4-shell}This table contains population in families by age.}
\end{table}

\begin{table}[H]
\begin{tabular}{l}
\textbf{PH5: Average Family Size by Age}\\
\textit{Universe: Families.}\\
\textcolor{lightgray}{Total:}\\
\quad \quad Under 18 years\\
\quad \quad 18 years and over\\
\end{tabular}
\caption{\label{tab:ph5-shell}This table contains average family size by age.}
\end{table}

\begin{table}[H]
\begin{tabular}{l}
\textbf{PH6: Family Type and Age for Own Children Under 18 Years}\\
\textit{Universe: Own Children Under 18 Years.}\\
\textcolor{lightgray}{Total:}\\
\quad \quad \textcolor{lightgray}{In married couple household:}\\
\quad \quad \quad \quad Under 4 years\\
\quad \quad \quad \quad 4 and 5 years\\
\quad \quad \quad \quad 6 to 11 years\\
\quad \quad \quad \quad 12 and 17 years\\
\quad \quad \textcolor{lightgray}{In cohabiting couple family:}\\
\quad \quad \quad \quad Under 4 years old\\
\quad \quad \quad \quad 4 and 5 years\\
\quad \quad \quad \quad 6 to 11 years\\
\quad \quad \quad \quad 12 and 17 years\\
\quad \quad \textcolor{lightgray}{Male householder, no spouse or partner present family:}\\
\quad \quad \quad \quad Under 4 years\\
\quad \quad \quad \quad 4 and 5 years\\
\quad \quad \quad \quad 6 to 11 years\\
\quad \quad \quad \quad 12 and 17 years\\
\quad \quad \textcolor{lightgray}{Female householder, no spouse or partner present family:}\\
\quad \quad \quad \quad Under 4 years\\
\quad \quad \quad \quad 4 and 5 years\\
\quad \quad \quad \quad 6 to 11 years\\
\quad \quad \quad \quad 12 and 17 years\\
\end{tabular}
\caption{\label{tab:ph6-shell}This table contains counts of own children under 18 years by family type and age of children.}
\end{table}

\begin{table}[H]
\begin{tabular}{l}
\textbf{PH7: Total Population in Occupied Housing Units by Tenure}\\
\textit{Universe: Population in Occupied Housing Units}\\
\textcolor{lightgray}{Total:}\\
\quad \quad Owned with a mortgage or a loan\\
\quad \quad Owned free and clear\\
\quad \quad Renter occupied\\
\end{tabular}
\caption{\label{tab:ph7-shell} This table contains household population counts based on the tenure status of the occupied housing unit.}
\end{table}

\begin{table}[H]
\begin{tabular}{l}
\textbf{PH8: Average Household Size of Occupied Housing Units by Tenure}\\
\textit{Universe: Occupied Housing Units}\\
Total:\\
\quad \quad Owner occupied\\
\quad \quad Renter occupied\\
\end{tabular}
\caption{\label{tab:ph8-shell} This table contains average household size based on the tenure status of the occupied housing unit. The PH8 total does not use light text because the indented cells are not expected to add up to the total.}
\end{table}

We end this section with some comments on the average tables PH1, PH5, and PH8. The PH1 table shows a hierarchical structure, with the age breakouts nested below the total. This accurately implies the nested averages add up to the total average--a consequence of each cell in PH1 sharing the same denominator, the total number of households. The PH5 table is similar. The nested averages add up to the total average because every cell uses the total number of families as a denominator. The PH8 table does not share this similarity. Despite displaying a hierarchical structure, the table behaves non-hierarchically because each cell uses a different denominator (total number of occupied housing units vs. total number of owner occupied housing units vs. total number of renter occupied housing units), which results in the total average not equaling the sum of the owner occupied average and the renter occupied average.

\subsection{Privacy Release Problem}

The Census Bureau is required to abide by the regulations of Title 13 when managing the collection, storage, and release of statistics about persons and households in the United States \cite{title13}. Furthermore, legacy statistical disclosure limitation techniques fail to uphold the Census Bureau's privacy standards since they are vulnerable to attacks that can reconstruct sensitive records from aggregate statistics \cite{ap-census-attack}. Hence, the Census Bureau decided to release many of the 2020 Census data products, including the S-DHC, using algorithms that satisfy modern privacy definitions like differential privacy \cite{dsep-dp}.

In this paper, we describe PHSafe, a differentially private algorithm for releasing the statistics that make up the \dhcjoin. PHSafe was designed to satisfy the following criteria:

\begin{itemize}
    \item \textit{Privacy:} The algorithm must satisfy zCDP with respect to arbitrary changes of any person's record values. This requires a careful accounting of privacy loss, as changing the values of one person's record, including which household they belong to, can change the properties of at most two households, and consequently manifest a large change in the output statistics. 
    \item \textit{Population Groups:} The algorithm must release statistics for a predefined set of race and ethnicity iterations and the Nation and State geography levels. 
    \item \textit{Static tabulations:}  As shown previously, tables are arranged with a hierarchical table cell structure. PHSafe directly estimates the most detailed level of disjoint table cells (i.e., the table basis) for a table. For example, for table PH4, PHSafe takes noisy measurements of the age breakouts but not the total.

    \item \textit{Averages and Statistical Postprocessing:} The PHSafe algorithm must output independent noisy measurements of the numerators and denominators associated with the average tabulations. A separate statistical postprocessing model combines these to produce the average tables. Statistical postprocessing is out of scope for this paper.

    \item \textit{Accuracy:} The algorithm must satisfy pre-determined accuracy targets, defined in terms of the 90\% margins of error (MOE) on the output counts. The margins of error capture error due to noise infusion within PHSafe but do not capture other sources of error, whether internal or external to PHSafe. Examples of other error sources include truncation within PHSafe and undercounting in the 2020 Census data collection procedures. Accuracy targets vary for different tables and population group levels. The desired targets are outlined in Table \ref{tab:moe-targets}.
    
    \item \textit{Integrality:} The output statistics must be integers.
    
    \item \textit{Minimal Consistency:} For the most part, the  PHSafe differential privacy algorithm is not required to ensure consistency.\footnote{Special handling of the average tables results in some consistency between PH4 and PH5, as well as between PH7 and PH8. This is discussed in more detail in Section \ref{sec:phsafe-algorithm}.} That is, different counts output by the algorithm need not be consistent with each other. One example of an expected inconsistency would be the number of people in households in the United States not equaling the sum of the number of people in households across all states. Another example would be different estimates of the same statistic appearing in different tables. For example, the population in households in PH2 is not expected to equal the population in households in PH7. 
    
    \item \textit{Negativity and Statistical Postprocessing:} The output of the PHSafe algorithm may contain negative counts. The previously mentioned statistical postprocessing model ensures non-negativity of counts and averages. 
\end{itemize}

In the rest of the paper, we describe the PHSafe differential privacy algorithm, discuss implementation and parameter tuning, and analyze bounds on the privacy loss achievable while satisfying the constraints mentioned above.

\section{PHSafe Algorithm}\label{sec:phsafe-algorithm}
PHSafe is a privacy algorithm for releasing tabulations pertaining to the population in households from the 2020 Census, iterated by major race and ethnicity at the Nation and State geography levels. This section covers a simplified abstraction of the PHSafe algorithm. Additional implementation details are discussed in Section \ref{sec:implementation}. In this section, we describe the algorithm as applied to the United States. Puerto Rico is discussed in Section \ref{sec:pr}. The algorithm acts on a selection of private dataframes derived from the 2020 Census.

\subsection{Input Data Description}\label{sec:input-descriptions}
The input dataframes for PHSafe are sourced from the Census Edited File (CEF). The CEF contains person and household attributes stored in a relational database. Person records are linkable to household records via a Master Address File ID (MAFID) join key. Many of the attributes available in the CEF do not factor into the S-DHC tabulations. Therefore, we assume a reduced-form data extraction as inputs to the PHSafe algorithm. These inputs are detailed below.

\subsubsection{Person Dataframe}
The first input is a dataframe, denoted person\_df, containing one row for each person in the United States. Each row consists of the following attributes: StateID, MAFID, Age, RaceEth, and Relationship.

\vspace{\baselineskip}

\noindent \textbf{StateID} is a single attribute that geolocates a person record to a unique state. We note that all records in the United States are vacuously included in the nation geographic entity. 

\vspace{\baselineskip}

\noindent \textbf{MAFID} is a foreign key that links a person record to a household record.

\vspace{\baselineskip}

\noindent \textbf{Age} is an integer which represents the age (in years) of the person.

\vspace{\baselineskip}

\noindent \textbf{RaceEth} is a single attribute that encodes up to six major race categories and a binary ethnicity indicator of the person. That is, one person's RaceEth attribute may indicate the person has one race (e.g., White alone) and is Hispanic or Latino while another person's RaceEth attribute indicates the person has two races (e.g., White and Black or African American) and is Not Hispanic or Latino. 

\vspace{\baselineskip}
    
\noindent \textbf{Relationship:} An attribute that represents the person's relationship to the householder for their associated household. Examples include opposite-sex spouse, grandchild, or nonrelative.

\subsubsection{Housing Unit Dataframe (Unit Dataframe)}
The second input dataframe, called unit\_df, contains one row for each occupied housing unit in the United States.\footnote{Vacant housing units exist in the CEF, but they do not factor into the S-DHC, so we assume an input extraction that only contains occupied housing units.} Each row in the unit dataframe consists of the following attributes: StateID, MAFID, HouseholderRaceEth, Tenure, HouseholdType.  

\vspace{\baselineskip}

\noindent \textbf{StateID} is a single attribute that geolocates the household record to its unique state. We note that all records are vacuously included in the nation geographic entity.

\vspace{\baselineskip}

\noindent \textbf{MAFID} is a primary key that uniquely identifies the household. 

\vspace{\baselineskip}

\noindent \textbf{HouseholderRaceEth} encodes the race and ethnicity of the householder in a similar manner as the RaceEth attribute in the person dataframe. 

\vspace{\baselineskip}

\noindent \textbf{Tenure} encodes the occupied housing unit's tenure status as either owned with a mortgage, owned free and clear, or rented.

\vspace{\baselineskip}

\noindent \textbf{HouseholdType} defines the household's composition, capturing key relationships between the householder and other members of the household. Examples include opposite-sex married couple family, same-sex cohabiting couple family, and male householder living alone.

\subsection{The Algorithm Description}\label{sec:algorithm-description}

We present an abstraction of the PHSafe implementation. Notation is outlined in Table \ref{tab:algorithm-notation}. PHSafe is the privacy algorithm for releasing population in household statistics from the 2020 Census for the S-DHC tables (PH1, PH2, PH3, PH4, PH5, PH6, PH7, and PH8) iterated by race and ethnicity as applicable. In our abstraction, PHSafe is executed on each table separately, with some special consideration for the average tables. 

Tables PH1, PH5, and PH8 are average tables. PHSafe does not directly release averages, but rather reports noisy numerator and denominator count estimates independently. We denote these by appending the suffixes \emph{\_num} and \emph{\_denom} (e.g., PH1\_num and PH1\_denom). We note that PH4 is identical to PH5\_num. To avoid redundancy, PHSafe only estimates PH4 and omits PH5\_num from its processing. Furthermore, we observe that PH8\_num and PH8\_denom do have hierarchical table structures, despite PH8 behaving non-hierarchically. For PH8\_num, the population in owner occupied units and the population in renter occupied units does sum to the total population in occupied housing units. For PH8\_denom, the number of owner occupied units and the number of renter occupied units sums to the total count of occupied housing units. Next, we note that PH8\_num is derivable from PH7, since the population in owner occupied units equals the sum of the populations in owned with a mortgage or a loan units and owned free and clear units. With this in mind, PHSafe opts to not directly estimate PH8\_num, but rather to derive its estimates from PH7. PH8\_denom is directly estimated.

Thus, we assume the PHSafe mechanism requires nine independent runs to generate the S-DHC: PH1\_num, PH1\_denom, PH2, PH3, PH4, PH5\_denom, PH6, PH7, and PH8\_denom.

The algorithm acts on the pair of private dataframes (person\_df, unit\_df) to produce its output for population groups. Each table defines a set of hierarchical data cells such as ``total'', ``married couple household'', or ``own child'' with respect to a table universe, the entities being tabulated. For example, the table universe for PH3 is the population under 18 years in households. The PHSafe algorithm only produces estimates for the basis of a table, meaning that records can be partitioned into disjoint categories that add up to the table universe total. PHSafe varies from table to table but conceptually it takes one of two forms:

\begin{enumerate}
    \item For PH1\_num, PH2, PH3, PH4, PH6, and PH7, PHSafe uses a filter-join-transform-measure query structure, described below in detail. The query involves a data transformation and a noisy measurement. Algorithm \ref{alg:phsafe-phjoin} presents the pseudocode.
    \item For PH1\_denom, PH5\_denom, and PH8\_denom, PHSafe uses a filter-transform-measure query structure. Algorithm \ref{alg:phsafe-unit-count} presents the pseudocode.
\end{enumerate}

In the first form (shown as Algorithm \ref{alg:phsafe-phjoin}), the goal is to count people in categories distinguished by household and person attributes. We use the table PH1\_num as an example. PH1\_num counts the population in households by voting age. These counts are produced for the Nation and State geography levels. The table is also iterated by the race and ethnicity of the householder for all iteration levels described in Section \ref{sec:problem}.  The filter-join-transform-measure structure is applied to all population group levels for table PH1\_num. For each population group level, there are four steps:

\begin{itemize}
    \item[\textbf{Filter:}] The person\_df and unit\_df are filtered to remove out-of-universe records. The universe of table PH1\_num is the population in households. Given the inputs described in Section \ref{sec:input-descriptions}, no records are removed in the case of PH1\_num. However, for tables like PH3, the filter would remove people 18 years or over.
    \item[\textbf{Join:}] The person\_df and unit\_df are inner joined on MAFID. The join internally enforces a truncation threshold that limits the number of people within a household to an input threshold (refer to Algorithm \ref{alg:trunc-and-join} for details of the join).
    \item[\textbf{Transform:}] Records in the joined dataframe are mapped to at most one population group within the current level. For example, for table PH1\_num, if the level is (State, A-G), each person in a household is mapped to its associated population group based on the state they reside in and the major race iteration (White alone, Black or African American alone, etc.) of the householder. Next, the person is mapped to exactly one basis table cell based on the criteria specified within the table specifications. For PH1\_num, the basis table cells are either ``under 18 years old'' or ``18 years and older.''
    \item[\textbf{Measure:}] A noisy measurement count (Algorithm \ref{alg:vector-base-discrete-gaussian}) is generated for each combination of population group and basis table cell.
\end{itemize}

The second form (Algorithm \ref{alg:phsafe-unit-count}) is similar to the first with a few exceptions. The measurement is a count of occupied housing units in categories distinguished by household attributes alone, so only the unit\_df is used. Thus, the join operator is not needed. 

\begin{table}[H]
    \centering
    \begin{tabular}{c p{.8\linewidth}}
        \toprule
        \textbf{Notation} & \textbf{Description} \\
        \midrule
        $\omega$ & The number of population group levels. \\
        $\mathcal{P}_i$ & Population group level $i$.  \\
        $\rho_i$ & The privacy-loss budget allocated to population group level $i$. \\
        $g_i$ & A 1-stable mapping of records to the population group in $\mathcal{P}_i$ to which the record belongs. \\
        $f$ & A 1-stable mapping of records to a set of table basis cells. This mapping varies by table. \\
        $\tau$ & The truncation threshold. \\
    \end{tabular}
    \caption{A summary of the notation used in Section~\ref{sec:phsafe-algorithm}}
    \label{tab:algorithm-notation}
\end{table}

\begin{algorithm}[H]
\caption{\label{alg:phsafe-phjoin} The main PHSafe algorithm for PH1\_num, PH2, PH3, PH4, PH6, PH7}
\begin{algorithmic}[1]

\Require person\_df: Private dataframe with attributes [StateID, MAFID, Age, RaceEth, Relationship] and one row for each person in the United States.
\Require unit\_df: Private dataframe with attributes [StateID, MAFID, HouseholderRaceEth, Tenure, HouseholdType] and one row for each occupied housing unit in the United States.
\Require $\{\rho_i\}_{i \in [1,\omega]}$: Privacy parameters for each population group level $i \in [1, \omega]$.
\Require $\tau$: Truncation threshold

\Procedure{PHSafe}{$person\_df$, $unit\_df$, $\{\rho_i\}$, $\tau$}
\State $person\_df \gets person\_df.$filter(PERSON\_CONDITION) \Comment{condition differs by table}
\State $unit\_df \gets unit\_df.$filter(UNIT\_CONDITION) \Comment{condition differs by table}

\State \textit{//Add attributes from unit\_df to rows from person\_df as shown in Algorithm \ref{alg:trunc-and-join}.}
\State $df \gets person\_df.$truncate\_and\_join$(unit\_df, MAFID, \tau)$

\For{$i \in [1, \omega]$} \label{line:pop-group-level-loop}
\State $df_i \leftarrow df$.map($g_i$) \label{line:query-start}
\Comment{Map each row to the appropriate population group}

\State $V \gets [ \ ] $
\For{$P \in \mathcal{P}_i$ }\label{line:df-group-loop-denom}
\State $v_i \gets$ \Call{VectorizePopulationGroup}{$df_i$, $P$,$f$} \Comment{Count vector for population group}
\State $V$.append($v_i$)
\Comment{Append to the vector of all counts}
\EndFor \label{line:query-end}
\State \Call{VectorDiscreteGaussian}{$V$, $\rho_i$, $(2\tau+2)$} \label{line:count}

\EndFor
\EndProcedure
\end{algorithmic}
\end{algorithm}

\begin{algorithm}[H]
\caption{\label{alg:phsafe-unit-count} The main PHSafe algorithm for PH1\_denom, PH5\_denom, PH8\_denom}
\begin{algorithmic}[1]
\Require Unit\_df: Private dataframe with attributes [StateID, MAFID, HouseholderRaceEth, Tenure, HouseholdType] and one row for each occupied housing unit in the United States. 
\Require $\{\rho_i\}_{i \in [1,\omega]}$: Privacy parameters for each population group level $i \in [1, \omega]$.

\Procedure{PHSafe}{$unit\_df$, $\{\rho_i\}$}
\State $unit\_df \gets unit\_df.$filter(UNIT\_CONDITION) \Comment{condition differs by table}

\For{$i \in [1, \omega]$} \label{line:pop-group-level-loop-unit-count}
\State $df_i \leftarrow unit\_df$.map($g_i$) \label{line:denom-query-start}
\Comment{Map each row to the appropriate population group}

\State $V \gets [ \ ] $
\For{$P \in \mathcal{P}_i$}\label{line:df-group-loop}
\State $v_i \gets$ \Call{VectorizePopulationGroup}{$df_i$, $P$,$f$} \Comment{Count vector for population group}
\State $V$.append($v_i$)
\Comment{Append to the vector of all counts}
\EndFor \label{line:denom-query-end}
\State \Call{VectorDiscreteGaussian}{$V$, $\rho_i$, $2$} \label{line:denom-count}
\EndFor
\EndProcedure
\end{algorithmic}
\end{algorithm}

\begin{algorithm}[H]
\caption{\label{alg:trunc-and-join} Truncate and join operator}
\begin{algorithmic}[1]
\Require{$left\_df$: A dataframe with attribute A.}
\Require{$right\_df$: Another dataframe with attribute A.}
\Require{A: An attribute that serves as the join key between two dataframes.}
\Require{$\tau$: A truncation threshold}
\Procedure{left\_df.truncate\_and\_join}{$right\_df, A, \tau$}

\For{$a \in domain(A)$}
\Comment{The domain of $A$ is taken from left\_df}
\State $\mathcal{R} \gets \{row \in left\_df | row[A] = a\}$
\State $left\_df \gets left\_df.drop(\mathcal{R})$
\If{$|\mathcal{R}| > \tau$}
\State \textit{// In particular, we order records first by a hash function (to avoid bias that would result from ordering by record attributes) and second by record attributes (in the unlikely scenario that two records share the same hash).}
\State $\mathcal{R} \gets$ the first $\tau$ elements of $\mathcal{R}$ according to a predefined ordering over all records. \label{line:random-trunc}
\EndIf
\State $left\_df \gets left\_df.append(\mathcal{R})$
\EndFor

\For{$a \in domain(A)$}
\Comment{The domain for $A$ is taken from right\_df}
\State $\mathcal{R} \gets \{row \in right\_df | row[A] = a\}$
\If{$|\mathcal{R}| > 1$}
\State $right\_df \gets right\_df.drop(\mathcal{R})$ \label{line:drop-all-trunc}
\EndIf
\EndFor

\State \textbf{return:} left\_df.join(right\_df, on=A)
\Comment{Inner join}

\EndProcedure

\end{algorithmic}
\end{algorithm}

\begin{algorithm}[H]
\caption{\label{alg:vector-tabulate-pop-group} Subroutine of PHSafe which maps each row to the appropriate data cell and returns a vector of counts.}
\begin{algorithmic}[1]
\Require $df$: A private dataframe. This dataframe should only contain the records in the population group $P$.
\Require $P$: The population group.
\Require $f$: A function that maps each row to the appropriate data cell.

\Procedure{VectorizePopulationGroup}{$df, P,f$}
\State \textit{//Produce a Vector of Counts for Population Group $P$}

\State $v\leftarrow$ df.map($f$).groupby(PopGroup, Basis).count() 
\State \textbf{Return} $v$
\EndProcedure
\end{algorithmic}
\end{algorithm}

\begin{algorithm}[H]
\caption{\label{alg:vector-base-discrete-gaussian} The base discrete Gaussian mechanism.}
\begin{algorithmic}[1]
\Require $V$: An $n$ dimensional vector of integers.
\Require $\rho$: A privacy-loss parameter.
\Require $\Delta$: Stability factor
\Procedure{VectorDiscreteGaussian}{$V, \rho$, $\Delta$}
\State $y \gets \mathcal{N}^{n}_{\mathbb{Z}}\left(\frac{\Delta^2}{2\rho}\right)$
\Comment Defined in Section \ref{sec:privacy-prelim}
\State \textbf{return} $V + y$
\EndProcedure
\end{algorithmic}
\end{algorithm}

\section{Privacy Preliminaries}
\label{sec:privacy-prelim}

In this section, we give necessary background on zCDP, including its formal definition and relevant privacy properties.
\subsection{Privacy definitions}
\label{sec:privacy-defintions}

\begin{definition}[Neighboring Databases]
\label{def:neighboring-databases}
Let $x,x'$ be databases represented as multisets of tuples. We say that $x$ and $x'$ are \emph{neighbors} if their symmetric difference is 1.
\end{definition}

\begin{definition}[Bounded-Neighboring Databases]
\label{def:bounded-neighboring-databases}
Let $x,x'$ be databases represented as multisets of tuples. We say that $x$ and $x'$ are \emph{bounded neighbors} if they differ by \emph{arbitrarily changing} at most one tuple.
\end{definition}
We sometimes refer to neighboring databases as ``unbounded neighbors'' to differentiate between Definitions~\ref{def:neighboring-databases} and \ref{def:bounded-neighboring-databases}. We explicitly specify ``bounded'' when applicable and otherwise assume ``neighbors'' refers to Definition~\ref{def:neighboring-databases}.

We now define zCDP, which bounds the \emph{R\'enyi divergence} between the distributions of a mechanism run on neighboring databases.

\begin{definition}
The \emph{R\'enyi divergence of order $\alpha$} between distribution $P$ and distribution $Q$, denoted $D_{\alpha}(P \| Q)$, is defined as
\begin{equation}
    D_{\alpha}(P \| Q) = \frac{1}{\alpha-1}\log\left(\underset{x \sim P}{\mathbb{E}} \left[ \left( \frac{P(x)}{Q(x)} \right)^{\alpha-1}\right]\right)
\end{equation}

When $\alpha=\infty$,
\begin{equation}
    D_\infty(P\| Q) = \sup_{x \in supp(Q)} \log \left( \frac{P(x)}{Q(x)} \right)
\end{equation}
\end{definition}

\begin{definition}[zCDP \cite{BunS16}]
An algorithm $M: \mathcal{X} \rightarrow \mathcal{Y}$ satisfies $\rho$-zCDP if for all neighboring $x, x' \in \mathcal{X}$ and for all $\alpha \in (1, \infty)$,
\begin{equation}
    D_{\alpha}(M(x) \| M(x')) \le \rho \alpha.
\end{equation}
\end{definition}

We also define bounded zCDP, which considers bounded-neighboring databases instead of unbounded-neighboring databases.
\begin{definition}(Bounded zCDP \cite{BunS16})\label{def:bounded-zcdp}
An algorithm $M: \mathcal{X} \rightarrow \mathcal{Y}$ satisfies bounded $\rho$-zCDP if for all bounded neighbors $x, x' \in \mathcal{X}$ and for all $\alpha \in (1, \infty)$,
\begin{equation}
    D_{\alpha}(M(x) \| M(x')) \le \rho \alpha.
\end{equation}
\end{definition}

\subsection{Transformation stability}
A \emph{transformation} is a mapping from one dataset to another.

\begin{definition}[\cite{McSherry09}]
A $c$-stable transformation $T:\mathcal{X} \rightarrow \mathcal{Y}$ is a transformation such that:
\[|T(x) \ominus T(x')| \leq c \cdot (|x \ominus x'|)\] where $\ominus$ is the symmetric difference.
\end{definition}

\begin{lemma}[\cite{Gaboardi20}]
Let $T: \mathcal{X} \rightarrow \mathcal{Y}$ be a $b$-stable transformation and $G: \mathcal{Y} \rightarrow \mathcal{Z}$ be a $c$-stable transformation. Then $G \circ T: \mathcal{X} \rightarrow \mathcal{Z}$ is a $b\cdot c$-stable transformation. 
\end{lemma}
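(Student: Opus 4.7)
The plan is to prove this directly from the definition of stability, without needing any auxiliary machinery. The statement is really just saying that stability multiplies under composition, which should fall out of chaining the two stability inequalities.

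First I would fix arbitrary inputs $x, x' \in \mathcal{X}$ and consider the quantity $|(G \circ T)(x) \ominus (G \circ T)(x')|$, which by definition of composition equals $|G(T(x)) \ominus G(T(x'))|$. Next I would apply the $c$-stability of $G$ to the pair $T(x), T(x') \in \mathcal{Y}$, yielding
\[
|G(T(x)) \ominus G(T(x'))| \leq c \cdot |T(x) \ominus T(x')|.
\]
Then I would apply the $b$-stability of $T$ to the pair $x, x'$, yielding $|T(x) \ominus T(x')| \leq b \cdot |x \ominus x'|$. Substituting this bound into the previous inequality gives
\[
|G(T(x)) \ominus G(T(x'))| \leq c \cdot b \cdot |x \ominus x'| = (b \cdot c) \cdot |x \ominus x'|,
\]
which is exactly the definition of $(b \cdot c)$-stability for $G \circ T$.

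There is no real obstacle here; the proof is essentially a two-line transitivity argument. The only subtlety worth flagging is that the bound must hold for \emph{all} pairs $x, x'$ (not just neighbors), since the stability definition is quantified over arbitrary datasets, but this is automatic because both invoked inequalities are themselves universally quantified. No assumption on $b$ or $c$ (e.g., being integers, or at least $1$) is needed, and the argument goes through verbatim for any nonnegative reals.
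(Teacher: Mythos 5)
Your proof is correct. The paper states this lemma without proof, deferring to the cited reference, so there is no in-paper argument to compare against; your two-step chaining of the stability inequalities is the standard direct argument and it goes through exactly as you describe. Your observation that the definition of $c$-stability in this paper is quantified over \emph{arbitrary} pairs $x,x'$ (not just neighbors) is the one point that actually needs checking, since it is what licenses applying the $c$-stability of $G$ to the pair $T(x), T(x')$, which need not be neighbors in $\mathcal{Y}$ even when $x$ and $x'$ are neighbors in $\mathcal{X}$. You handled that correctly.
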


\begin{lemma}\label{lem:stab_cdp}
Let $T:\mathcal{X}\rightarrow \mathcal{Y}$ be a $b$-stable transformation. Let $M:\mathcal{Y}\rightarrow \mathcal{Z}$ be a $\rho$-zCDP mechanism. Then $M \circ T: \mathcal{X} \rightarrow \mathcal{Z}$ is a $b^2 \cdot \rho$-zCDP mechanism.
\end{lemma}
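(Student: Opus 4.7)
The plan is to show that for arbitrary (unbounded) neighbors $x, x' \in \mathcal{X}$ and any $\alpha \in (1, \infty)$, the post-transformation Rényi divergence satisfies $D_\alpha(M(T(x)) \,\|\, M(T(x'))) \le b^2 \rho \alpha$. By definition of $b$-stability, since $|x \ominus x'| = 1$, we immediately have $|T(x) \ominus T(x')| \le b$. So the task reduces to controlling the Rényi divergence of $M$ applied to two datasets in $\mathcal{Y}$ whose symmetric difference is at most $b$. In other words, the lemma reduces to a group privacy statement for zCDP at group size $b$.

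Next I would build a chain of intermediate datasets $T(x) = y_0, y_1, \dots, y_k = T(x')$ with $k \le b$, where each adjacent pair $y_j, y_{j+1}$ differs by exactly one tuple (i.e. is a neighbor in the sense of Definition~\ref{def:neighboring-databases}). Such a chain exists because we can remove the $|T(x) \setminus T(x')|$ excess tuples of $T(x)$ one at a time, then insert the $|T(x') \setminus T(x)|$ missing tuples of $T(x')$ one at a time. The $\rho$-zCDP assumption on $M$ then gives $D_\alpha(M(y_j) \,\|\, M(y_{j+1})) \le \rho \alpha$ for every $j$ and every $\alpha > 1$.

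The key step is then to invoke group privacy for zCDP, which says that if $M$ is $\rho$-zCDP then for datasets at symmetric-difference distance at most $b$ one has $D_\alpha(M(y_0) \,\|\, M(y_k)) \le b^2 \rho \alpha$ for every $\alpha > 1$. This result is stated and proved in Bun and Steinke~\cite{BunS16}; I would cite it directly rather than reproving it. The proof there is the nontrivial part and is the main obstacle: Rényi divergence does not obey an ordinary triangle inequality, so one cannot just sum the $k$ pairwise bounds to get a linear $k\rho\alpha$ growth. Instead, the argument iterates a ``shifted-order'' inequality relating $D_\alpha(P\|R)$ to $D_{\alpha'}(P\|Q)$ and $D_{\alpha''}(Q\|R)$ at larger orders $\alpha', \alpha''$, and the quadratic scaling $b^2$ emerges because each step costs both a divergence increment and an order shift that inflates the final bound. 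For the Gaussian mechanism this matches the intuition that a group of size $b$ multiplies $L_2$-sensitivity by $b$, hence multiplies $\rho$ by $b^2$.

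Combining the three pieces, for arbitrary neighbors $x, x' \in \mathcal{X}$,
\begin{equation}
D_\alpha\bigl(M(T(x)) \,\|\, M(T(x'))\bigr) \le b^2 \rho \alpha \qquad \text{for all } \alpha \in (1,\infty),
\end{equation}
which is precisely the definition of $b^2 \rho$-zCDP for the composed mechanism $M \circ T$. I would close by remarking that the identical argument works for Definition~\ref{def:bounded-zcdp} (bounded zCDP) since bounded neighbors also satisfy $|x \ominus x'| \le 2$ (or $1$ under the convention the paper uses), so $b$-stability still bounds the downstream symmetric difference and group privacy still applies.
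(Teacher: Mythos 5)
Your proposal is correct and follows essentially the same route as the paper's proof: use $b$-stability to bound the symmetric difference of $T(x)$ and $T(x')$ by $b$, then invoke the group privacy guarantee of zCDP from Bun and Steinke to obtain the $b^2\rho$ bound. The paper states this in two sentences; your version just spells out the chaining and the reason the scaling is quadratic rather than linear.
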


\begin{proof}
If $x$ and  $x'$ differ by one record, then $T(x)$ and $T(x')$ differ by at most $b$ records under $b$-stable transformation $T$. The result follows from the group privacy guarantee of zCDP \cite{BunS16}.
\end{proof}

\subsection{Base Mechanisms}
\label{sec:base-mechansisms}
\begin{definition}[L2 Sensitivity]
\label{def:sensitivity}
Given a vector function $q: \mathcal{X} \rightarrow \mathbb{Z}^n$, the L2 sensitivity of $q$ is \linebreak $\sup_{x \approx x'} \|q(x)-q(x')\|_2$ where $x \approx x'$ denotes $x$ and $x'$ are neighboring databases and $\| \cdot \|_2$ is the Euclidean norm.
\end{definition}
Like neighboring databases, there is an equivalent notion of bounded sensitivity.

\begin{definition}[Bounded L2 Sensitivity]
\label{def:sensitivity}
Given a vector function $q: \mathcal{X} \rightarrow \mathbb{Z}^n$, the bounded L2 sensitivity of $q$ is $\sup_{x \approx x'} \|q(x)-q(x')\|_2$ where $x \approx x'$ denotes $x$ and $x'$ are bounded neighboring databases and $\| \cdot \|_2$ is the Euclidean norm.
\end{definition}

\begin{definition}
\label{def:discrete-gaussian-distribution}
The discrete Gaussian distribution $\mathcal{N}_{\mathbb{Z}}(\sigma^2)$ centered at 0 is
\begin{equation}
\forall x \in \mathbb{Z}, \quad \Pr[X=x] = \frac{e^{-x^2/2\sigma^2}}{\sum_{y \in \mathbb{Z}}e^{-y^2/2 \sigma^2}}.
\end{equation}
\end{definition}

We can also consider the multidimensional discrete Gaussian distribution, $\mathcal{N}^n_{\mathbb{Z}}(\sigma^2)$ which is an $n$ length vector where each item in the vector is an independent sample from the discrete Gaussian distribution.

\begin{lemma}{\cite{CanonneK2020}}
\label{lem:vector-discrete-gaussian-satisfies-zcdp}
Let $q: \mathcal{X} \rightarrow \mathbb{R}^n$. Let $s$ be such that $\|q(x) - q(x')\|_2 \leq s$ for all neighboring $x, x'$. Let $\Delta > 0$. Then \textsc{VectorDiscreteGaussian}$(q(x), \rho,\Delta)$ from Algorithm~\ref{alg:vector-base-discrete-gaussian} satisfies $s^2\rho/\Delta^2$-zCDP as a function of $x$.
\end{lemma}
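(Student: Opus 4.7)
The plan is to reduce directly to the standard zCDP guarantee for the (multidimensional) discrete Gaussian mechanism from Canonne, Kamath, and Steinke, which is the cited reference. Their result says: if $q$ has L2 sensitivity $s$ and $Z \sim \mathcal{N}^n_{\mathbb{Z}}(\sigma^2)$, then the mechanism $x \mapsto q(x) + Z$ satisfies $s^2/(2\sigma^2)$-zCDP, i.e., for all neighboring $x,x'$ and all $\alpha \in (1,\infty)$,
\begin{equation*}
D_\alpha\bigl(q(x) + Z \,\|\, q(x') + Z\bigr) \le \frac{s^2}{2\sigma^2}\,\alpha.
\end{equation*}
This is the workhorse; I would take it as a black-box citation rather than re-deriving it.

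Next I would identify the noise scale used inside Algorithm~\ref{alg:vector-base-discrete-gaussian}. By inspection, \textsc{VectorDiscreteGaussian}$(q(x), \rho, \Delta)$ returns $q(x) + y$ with $y \sim \mathcal{N}^n_{\mathbb{Z}}(\sigma^2)$ for $\sigma^2 = \Delta^2/(2\rho)$. Applying the black-box bound with this choice of $\sigma^2$ and with the hypothesized sensitivity bound $\|q(x)-q(x')\|_2 \le s$ on all neighbors, I get
\begin{equation*}
\frac{s^2}{2\sigma^2} \;=\; \frac{s^2}{2 \cdot \Delta^2/(2\rho)} \;=\; \frac{s^2 \rho}{\Delta^2},
\end{equation*}
which is exactly the claimed zCDP parameter.

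The only subtlety worth flagging is that the cited theorem is typically stated for a function with a fixed L2 sensitivity, and here we are only given an upper bound $s$ on $\|q(x)-q(x')\|_2$. This is harmless: any upper bound on the sensitivity yields an upper bound on the R\'enyi divergence of the same form, so replacing the exact sensitivity by $s$ in the CKS bound is immediate and monotone. I would state this one-line observation explicitly for clarity. There is no real obstacle here; the lemma is essentially a rescaling of a known result, and the entire proof should be two or three lines once the substitution $\sigma^2 = \Delta^2/(2\rho)$ is made.
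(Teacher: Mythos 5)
Your proposal is correct and matches the paper's treatment: the paper gives no proof of this lemma, simply citing Canonne--Kamath--Steinke, and your derivation (invoking their $s^2/(2\sigma^2)$-zCDP guarantee for the multidimensional discrete Gaussian and substituting $\sigma^2 = \Delta^2/(2\rho)$ to obtain $s^2\rho/\Delta^2$) is exactly the rescaling that the citation stands in for. Your remark that an upper bound on the L2 sensitivity suffices is also the right, if minor, point to flag.
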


\subsection{Privacy Properties}

\subsubsection{Composition}
\label{sec:composition}

One of the most useful and important properties of privacy definitions is their behavior under composition.
In this section, we state sequential composition results for zCDP.

\begin{lemma}(Sequential composition of zCDP \cite{BunS16})
\label{lem:sequential-composition-zcdp}
Let $M_1: \mathcal{X} \rightarrow \mathcal{Y}$ and $M_2: \mathcal{X} \rightarrow \mathcal{Z}$ be mechanisms satisfying $\rho_1$-zCDP and $\rho_2$-zCDP respectively. Let $M_3(x) = (M_1(x), M_2(x))$. Then $M_3$ satisfies $(\rho_1 + \rho_2)$-zCDP.
\end{lemma}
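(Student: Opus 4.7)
The plan is to unfold the definition of zCDP and exploit the fact that the Rényi divergence is additive on product distributions. Fix any neighboring databases $x, x' \in \mathcal{X}$ and any order $\alpha \in (1, \infty)$. By the definition of zCDP, we have the two hypotheses $D_\alpha(M_1(x)\|M_1(x')) \le \rho_1 \alpha$ and $D_\alpha(M_2(x)\|M_2(x')) \le \rho_2 \alpha$. The goal is to establish the analogous bound $D_\alpha(M_3(x)\|M_3(x')) \le (\rho_1+\rho_2)\alpha$, which then yields $M_3$ satisfies $(\rho_1+\rho_2)$-zCDP since $x,x'$ and $\alpha$ were arbitrary.

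The key observation is that $M_1$ and $M_2$ are applied to $x$ using independent randomness, so the distribution of $M_3(x) = (M_1(x), M_2(x))$ is the product distribution of $M_1(x)$ and $M_2(x)$; the same holds for $x'$. Writing $P_i$ for the law of $M_i(x)$ and $Q_i$ for the law of $M_i(x')$, I would plug the product densities $(P_1 \times P_2)(y,z) = P_1(y)P_2(z)$ and similarly for $Q_1 \times Q_2$ into the definition of Rényi divergence:
\begin{equation*}
D_\alpha(P_1 \times P_2 \,\|\, Q_1 \times Q_2) = \frac{1}{\alpha-1}\log \underset{(y,z) \sim P_1 \times P_2}{\mathbb{E}}\left[\left(\frac{P_1(y)P_2(z)}{Q_1(y)Q_2(z)}\right)^{\alpha-1}\right].
\end{equation*}
Since $(y,z)$ are drawn independently, the expectation factors into a product of two expectations, one in $y$ and one in $z$. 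Taking the logarithm turns this product into a sum, yielding the additivity identity
\begin{equation*}
D_\alpha(P_1 \times P_2 \,\|\, Q_1 \times Q_2) = D_\alpha(P_1 \,\|\, Q_1) + D_\alpha(P_2 \,\|\, Q_2).
\end{equation*}

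Applying this to $P_i = M_i(x)$ and $Q_i = M_i(x')$ and then invoking the zCDP hypotheses for $M_1$ and $M_2$ gives
\begin{equation*}
D_\alpha(M_3(x)\|M_3(x')) = D_\alpha(M_1(x)\|M_1(x')) + D_\alpha(M_2(x)\|M_2(x')) \le \rho_1 \alpha + \rho_2 \alpha = (\rho_1+\rho_2)\alpha,
\end{equation*}
which is precisely the desired bound. There is no real obstacle here—the proof is essentially a one-line consequence of additivity of Rényi divergence under independence. The only subtlety worth flagging is the independence assumption: the statement as written has $M_2(x)$ not depending on the output of $M_1(x)$, so the product-measure argument applies directly. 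For an adaptive version where $M_2$ can use $M_1(x)$ as side information, one would instead condition on the first output and use a tower-property argument, but that is not needed for the non-adaptive statement given here.
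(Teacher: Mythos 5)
Your proof is correct: the paper does not prove this lemma itself but simply cites Bun and Steinke, and your argument --- factoring the product density inside the R\'enyi divergence, using independence of the two mechanisms' randomness to split the expectation, and turning the resulting product into a sum via the logarithm --- is exactly the standard proof given in that reference for the non-adaptive case. Your remark that the adaptive variant would require conditioning on the first output is also accurate and is the right subtlety to flag.
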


\subsubsection{Postprocessing}
\label{sec:post-processing-background}

Another useful property is that zCDP is closed under postprocessing, meaning that the privacy guarantee cannot be weakened by manipulating the outputs of a zCDP mechanism without reference to the protected inputs.

\begin{lemma}(Postprocessing for zCDP \cite{BunS16})
\label{lem:post-processing}
Let $M: \mathcal{X} \rightarrow \mathcal{Y}$ and $f:\mathcal{Y} \rightarrow \mathcal{Z}$ be randomized algorithms. Suppose $M$ satisfies $\rho$-zCDP. Then $f \circ M: \mathcal{X} \rightarrow \mathcal{Z}$ satisfies $\rho$-zCDP.
\end{lemma}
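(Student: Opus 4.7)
The plan is to reduce the claim to the data processing inequality for R\'enyi divergence, which is the workhorse behind every standard postprocessing proof for divergence-based privacy notions. The key observation is that $\rho$-zCDP is defined entirely in terms of a bound on $D_\alpha(M(x) \| M(x'))$ for every order $\alpha > 1$ and every pair of neighboring inputs, so once we show that applying a randomized map $f$ to both output distributions cannot increase any R\'enyi divergence, we are essentially done.

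The steps I would carry out, in order, are: (i) fix an arbitrary pair of neighboring databases $x, x' \in \mathcal{X}$ and an arbitrary $\alpha \in (1, \infty)$; (ii) view $f$ as a Markov kernel from $\mathcal{Y}$ to $\mathcal{Z}$, so that $f(M(x))$ and $f(M(x'))$ are obtained from $M(x)$ and $M(x')$ respectively by pushing them through the same kernel; (iii) invoke the data processing inequality for R\'enyi divergence, which yields
\begin{equation}
D_\alpha\bigl(f(M(x)) \,\big\|\, f(M(x'))\bigr) \;\le\; D_\alpha\bigl(M(x) \,\big\|\, M(x')\bigr);
\end{equation}
(iv) apply the assumption that $M$ is $\rho$-zCDP to bound the right-hand side by $\rho\alpha$; (v) conclude that $f \circ M$ satisfies $\rho$-zCDP since $\alpha$ and the neighbors were arbitrary.

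The only real obstacle is justifying the data processing inequality itself, since the rest is bookkeeping. If I wanted to avoid citing it as a black box, I would give a short self-contained argument: for a deterministic $f$, the bound follows from Jensen's inequality applied to the convex function $t \mapsto t^{\alpha-1}$ (for $\alpha > 1$) once one writes the likelihood ratio for $f(M(x))$ versus $f(M(x'))$ at a point $z$ as a conditional expectation of the likelihood ratio for $M(x)$ versus $M(x')$ given $f^{-1}(z)$; then a general randomized $f$ is handled by conditioning on its internal coins and noting that the R\'enyi divergence of a mixture is dominated by the mixture of the R\'enyi divergences via joint quasi-convexity. For the purposes of this paper the clean route is to cite the data processing inequality from \cite{BunS16} (or equivalently the original statement in \cite{CanonneK2020}), since this is the same tool already used implicitly in Lemma~\ref{lem:vector-discrete-gaussian-satisfies-zcdp} and Lemma~\ref{lem:stab_cdp}.
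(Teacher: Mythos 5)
The paper gives no proof of this lemma; it is imported directly from \cite{BunS16}. Your argument via the data processing inequality for R\'enyi divergence is correct and is precisely the standard proof of the cited result, so there is nothing to add beyond noting that your reduction (fix neighbors and $\alpha$, push both distributions through the same kernel, apply the DPI, then the zCDP bound) is exactly how the reference establishes it.
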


\section{Privacy Analysis}\label{sec:privacy-analysis}
PHSafe measures privacy loss with respect to neighboring databases which differ in one person record. Under unbounded zCDP, this difference is the presence/absence (symmetric difference of 1) of one person record. Under bounded zCDP, this difference is an arbitrary change in a single person record.

Although PHSafe computations involve two separate dataframes, to avoid separate neighboring concepts (person vs. household), we imagine a model where both tables are produced by a transformation of an underlying single dataset called the base person dataframe. This hypothetical base person dataframe contains all necessary information to derive the person and unit dataframes described as inputs to the algorithm. The values of the person dataframe can be derived as a function of a single person record, resulting in a 1-stable transformation on the base person dataframe. The unit dataframe is produced by a function of multiple records, all of which share the same MAFID. This function is 2-stable because adding/removing one person can result in the original household being removed and another household replacing it.

\subsection{Stability}\label{sec:stability}
We now analyze the stability of the truncate and join step specified in Algorithm~\ref{alg:trunc-and-join}.
\begin{lemma}\label{lem:trunc-and-join-stability}
Algorithm \ref{alg:trunc-and-join} is a $(2\tau + 2)$-stable transformation.
\end{lemma}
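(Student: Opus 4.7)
The plan is to decompose Algorithm \ref{alg:trunc-and-join} into three operations---truncation of left\_df, deduplication of right\_df, and the final inner join---bound the stability of each piece, and then compose these bounds with the 1-stable (person\_df) and 2-stable (unit\_df) transformations from the base person dataframe described at the start of Section \ref{sec:privacy-analysis}.

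For the truncation step on left\_df, I would fix a join key $a$ and observe that the retained rows are the top $\min(k, \tau)$ records under the predefined hash-based ordering, where $k$ is the multiplicity of $a$. Inserting or removing one record with key $a$ either lies outside the top-$\tau$ window (no effect on the output) or inside it (swapping exactly one record in or out), and records with other keys are unaffected; thus truncation is 2-stable. For the deduplication step on right\_df, a key contributes exactly one row iff its multiplicity equals $1$, so inserting or removing one record with that key toggles the contribution by at most one row (flipping $0 \leftrightarrow 1$, flipping $1 \leftrightarrow 2$, or acting as a no-op); dedup is therefore 1-stable.

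For the inner join, let $L, R$ denote the truncated and deduped intermediate inputs, so each key appears at most $\tau$ times in $L$ and at most once in $R$. I would bound the change in the join output by a triangle inequality: holding $R$ fixed, each added or removed row in $L$ joins with at most one row of $R$, giving at most $|L \ominus L'|$ output changes; holding $L'$ fixed, each added or removed row in $R$ joins with at most $\tau$ rows of $L'$, giving at most $\tau \cdot |R \ominus R'|$ output changes. Summing yields the join's overall contribution.

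Putting everything together, a single-record change in the base person dataframe propagates to at most a symmetric difference of $1$ in person\_df and $2$ in unit\_df; truncation then yields at most a symmetric difference of $2$ in $L$, and dedup yields at most $2$ in $R$. The inner-join bound then gives an overall output change of at most $2 \cdot 1 + 2 \cdot \tau = 2\tau + 2$, which is the claimed stability. The main obstacle is the join's asymmetric sensitivity to its two inputs---a single right-side row can be amplified by a factor of $\tau$---which is exactly why the truncation cap on left\_df is necessary and why $\tau$ appears in the final bound.
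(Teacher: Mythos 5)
Your proposal is correct and follows essentially the same route as the paper: the same decomposition into a 2-stable truncation of the person side, a 1-stable deduplication of the (already 2-stable) unit view, and the standard join bound of (left stability)$\times$(right multiplicity) $+$ (right stability)$\times$(left multiplicity) $= 2\cdot 1 + 2\cdot\tau = 2\tau+2$. The only difference is that you spell out the truncation stability and the triangle-inequality derivation of the join bound, which the paper instead cites or asserts.
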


\begin{wrapfigure}{r}{0.375\textwidth}
\begin{center}
    \includegraphics[width=0.35\textwidth]{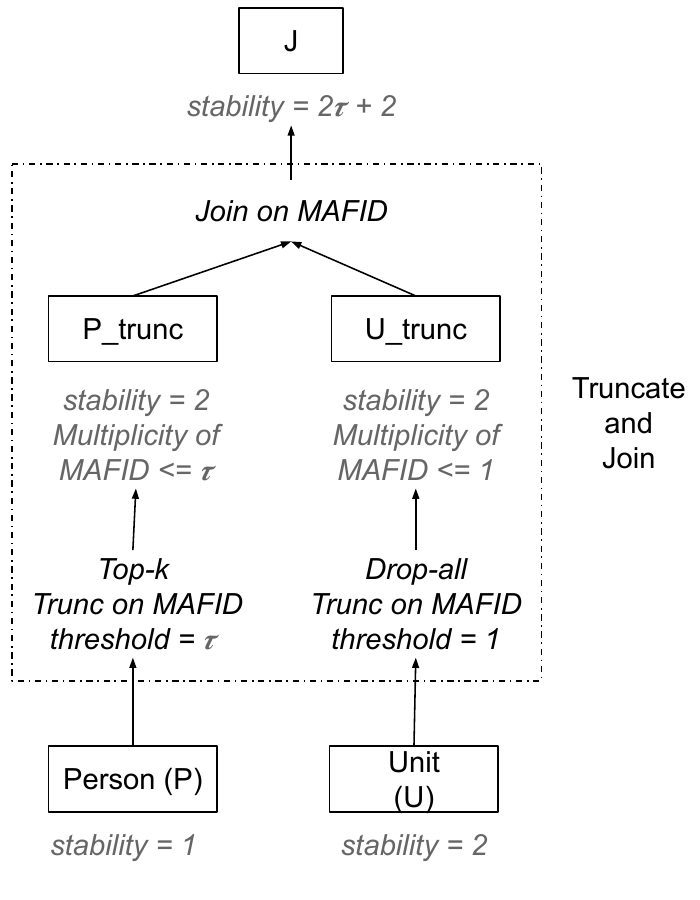}
  \end{center}
  \caption{Stability of the $J$ view. The person dataframe and unit dataframe are derived from the base person dataframe via a stability 1 and 2 transformation, respectively.}
\end{wrapfigure}

\begin{proof}
Let $P$ and $P'$ be neighboring person dataframes that differ in the presence or absence of one person. Let $U$ and $U'$ denote the corresponding unit dataframes. Let 
\begin{align*} 
J &=P.\text{truncate\_and\_join}(U, MAFID, \tau) \\ 
J' &=P'.\text{truncate\_and\_join}(U', MAFID, \tau) 
\end{align*} where the join occurs on the MAFID attribute. We show that $|J \ominus J'| \leq (2\tau+2)(|P \ominus P'|)=2\tau+2$.  Note, the truncate\_and\_join operator truncates each relation independently prior to the join.

First, we consider how adding or removing a record in the base person dataframe changes the two truncate\_and\_join operands post-truncation (in terms of symmetric difference). For the person dataframe, the truncation method in line \ref{line:random-trunc} takes the first $\tau$ records of each household according to a predefined ordering over the universe of records. This truncation is known to be $2$-stable \cite{Ebadi16}  since the addition of one record can result in the new record being included in the truncated set and another record being removed. That is, the post-truncation person dataframe differs by at most $2$ records prior to the join. For the unit dataframe, the truncation method in line \ref{line:drop-all-trunc} drops all records with non-unique MAFIDs. With respect to adding or removing a unit record from the unit dataframe, this is a $1$-stable transformation. However, the unit view itself has stability $2$ with respect to adding or removing a record in the base person dataframe. For example, removing a person record from a household can change the household type (e.g., from a married couple family to nonfamily) resulting in a symmetric difference of 2 for neighboring datasets. Hence, the post-truncation unit dataframe differs by at most $2$ records when adding or removing a record from the  base person dataframe. 

Next, we consider the join. Due to truncation, the max multiplicity of the MAFIDs of the person and unit dataframes is bounded by $\tau$ and $1$, respectively. Likewise, the stability of the truncated person and unit dataframe are both $2$.
The total stability of the view, $J$, is bounded by the MAFID multiplicity of the person dataframe times the stability of the unit dataframe plus the MAFID multiplicity of the unit dataframe times the stability of the person dataframe. This together results in a total change of at most $2\tau +2$.
\end{proof}

\subsection{PHSafe}
\label{sec:phsafe-gaussian-satisfies-zcdp}
Recall that, depending on the table in question, the PHSafe algorithm takes one of two forms: Algorithm~\ref{alg:phsafe-phjoin} or Algorithm~\ref{alg:phsafe-unit-count}. Here, we show how the privacy loss of each of these algorithms is derived. These values are with respect to unbounded zCDP, the conversion to bounded zCDP can be found in Section~\ref{sec:bounded-dp}.

\begin{theorem}
Algorithm \ref{alg:phsafe-phjoin} satisfies $\rho = \sum_{i \in [1,\omega]} \rho_i$-zCDP, where $\omega$ is the number of population group levels.
\end{theorem}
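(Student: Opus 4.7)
The plan is to decompose Algorithm~\ref{alg:phsafe-phjoin} into a $(2\tau+2)$-stable data transformation followed by $\omega$ independent noisy measurements, and then assemble the privacy guarantee using transformation stability, the discrete Gaussian lemma, and sequential composition.

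First, I would observe that the filter steps at the top of the algorithm are each $1$-stable, so composing them with the $1$-stable person-view derivation and the $2$-stable unit-view derivation from the base person dataframe (as described in Section~\ref{sec:stability}) does not change the respective stabilities of the two inputs fed into \texttt{truncate\_and\_join}. Applying Lemma~\ref{lem:trunc-and-join-stability} then gives that the entire transformation $T$ mapping the base person dataframe to the joined dataframe $df$ is $(2\tau+2)$-stable.

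Second, I would analyze the body of the outer loop at each level $i$ (lines~\ref{line:query-start}--\ref{line:count}) as a mechanism $M_i$ acting on $df$. Because $g_i$ maps each record to at most one population group within $\mathcal{P}_i$ and $f$ assigns each such record to exactly one basis cell, each record of $df$ contributes $+1$ to exactly one coordinate of the concatenated count vector $V$. Hence the L2 sensitivity of $V$, viewed as a function of $df$, is at most $1$. Lemma~\ref{lem:vector-discrete-gaussian-satisfies-zcdp} applied with $s = 1$ and $\Delta = 2\tau+2$ then shows that $M_i = \textsc{VectorDiscreteGaussian}(V, \rho_i, 2\tau+2)$ satisfies $\rho_i/(2\tau+2)^2$-zCDP as a function of $df$.

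Third, I would compose $M_i$ with $T$ via Lemma~\ref{lem:stab_cdp}: since $T$ is $(2\tau+2)$-stable and $M_i$ is $\rho_i/(2\tau+2)^2$-zCDP, the composition $M_i \circ T$ satisfies $(2\tau+2)^2 \cdot \rho_i/(2\tau+2)^2 = \rho_i$-zCDP as a function of the base person dataframe. The full output of Algorithm~\ref{alg:phsafe-phjoin} is the tuple $(M_1 \circ T, \ldots, M_\omega \circ T)$, so sequential composition (Lemma~\ref{lem:sequential-composition-zcdp}) yields the claimed bound $\sum_{i \in [1,\omega]} \rho_i$-zCDP. The only nontrivial bookkeeping is the stability analysis of the join across two inputs with different stabilities, which Lemma~\ref{lem:trunc-and-join-stability} already encapsulates; everything else is a mechanical application of the preliminaries.
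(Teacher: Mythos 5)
Your proposal is correct and follows essentially the same route as the paper's proof: decompose the algorithm into the $(2\tau+2)$-stable truncate-and-join transformation followed by per-level sensitivity-$1$ count vectors noised via Lemma~\ref{lem:vector-discrete-gaussian-satisfies-zcdp}, then combine stability and sequential composition. The only cosmetic difference is that you apply the stability blow-up to each $M_i \circ T$ before composing, while the paper composes the $M_i$ first and applies the stability factor once at the end; the two orderings are equivalent.
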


\begin{proof}
Let $P$ and $P'$ be neighboring databases, i.e., they differ in the presence or absence of one person. Let $U$ and $U'$ denote the corresponding unit dataframes. 
We want to show that for all $\alpha \in (1, \infty)$, 
    \[    D_{\alpha}(M(P, U) \| M(P', U')) \le \alpha \cdot \rho.\]
    where $M$ is Algorithm~\ref{alg:phsafe-phjoin}. 

We can decompose the algorithm $M$ into a transformation function $T$ followed by a sequential composition of $\omega$ zCDP mechanisms $M_1, \ldots, M_\omega$ that take as input the joined dataframes output by $T$ and output the statistics for population group level $i$. 

Let $J$ and $J'$ be the result of transformation $T$ on $(P, U)$ and $(P', U')$, respectively. Since the mappings, $f$ and $g_i$ are 1-stable transformations, the L2 distance in the query defined in Lines~\ref{line:query-start}-\ref{line:query-end} is  1. It follows that for all $i \in [1, \omega], \ D_{\alpha}(M_i(J) \| M_i(J')) \le \alpha \cdot \rho_i/(2\tau+2)^2$ by Lemma \ref{lem:vector-discrete-gaussian-satisfies-zcdp}. Then, by sequential composition, we have $D_{\alpha}(\{M_i(J)\}_i \| \{M_i(J')\}_i) \le \alpha \cdot \sum_i\rho_i/(2\tau+2)^2$.
Finally, since the stability of $T$ is $2\tau + 2$ (per Lemma \ref{lem:trunc-and-join-stability}) we get: \[D_{\alpha}\left((\{M_i\}_i\circ T) (P,U) \|(\{M_i\}_i \circ T) (P', U')\right) \le \alpha \cdot \sum_i\rho_i = \alpha \cdot \rho.\]
\end{proof}

For completeness, we also give privacy guarantees for Algorithm~\ref{alg:phsafe-unit-count}. The proof is straightforward, as this algorithm does not involve a join operator.
\begin{theorem}
Algorithm \ref{alg:phsafe-unit-count} is $\rho = \sum_{i \in [1,\omega]} \rho_i$-zCDP, where $\omega$ is the number of population group levels. 
\end{theorem}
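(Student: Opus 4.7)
The plan is to mirror the structure of the proof given for Algorithm~\ref{alg:phsafe-phjoin}, but skipping the truncate\_and\_join step entirely. I would decompose Algorithm~\ref{alg:phsafe-unit-count} into a transformation $T$ that produces the filtered unit dataframe from the base person dataframe, followed by a sequential composition of $\omega$ independent discrete Gaussian mechanisms $M_1,\ldots,M_\omega$, one per population group level, each operating on the output of $T$.

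For a fixed $i$, the inner operations in lines~\ref{line:denom-query-start}--\ref{line:denom-query-end} consist of applying $g_i$, grouping, and counting via $f$, each of which is 1-stable. Hence the vector-valued query feeding \textsc{VectorDiscreteGaussian} has L2 sensitivity at most $1$ with respect to neighboring filtered unit dataframes. Invoking Lemma~\ref{lem:vector-discrete-gaussian-satisfies-zcdp} with $s = 1$ and the stability factor $\Delta = 2$ used in line~\ref{line:denom-count}, each $M_i$ satisfies $\rho_i / 4$-zCDP as a function of the filtered unit dataframe. By sequential composition (Lemma~\ref{lem:sequential-composition-zcdp}), the tuple $(M_1(T(\cdot)),\ldots,M_\omega(T(\cdot)))$ satisfies $\sum_i \rho_i / 4$-zCDP as a function of the filtered unit dataframe.

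To finish, I would account for the stability of $T$ relative to the base person dataframe. As argued in Section~\ref{sec:stability}, the unit dataframe is a $2$-stable view of the base person dataframe, because adding or removing one person record can change the derived household attributes (e.g., \texttt{HouseholdType}) and therefore alter up to two unit rows. The filter step is a $1$-stable transformation on the unit dataframe, so $T$ overall is $2$-stable. Lemma~\ref{lem:stab_cdp} then upgrades the $\sum_i \rho_i / 4$-zCDP bound to $2^2 \cdot \sum_i \rho_i / 4 = \sum_i \rho_i$-zCDP on the composition $\{M_i\}_i \circ T$, which is exactly the claim.

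I do not anticipate any real obstacle; the argument is a strict simplification of the previous theorem. The one piece worth stating explicitly is the bookkeeping that justifies the choice $\Delta = 2$ in line~\ref{line:denom-count}: the factor $\Delta^2 = 4$ in the noise scale is exactly what cancels the $b^2 = 4$ incurred by Lemma~\ref{lem:stab_cdp} when lifting through the $2$-stable unit view, so that each level is charged precisely $\rho_i$ rather than $4\rho_i$. This is the analogue of how $(2\tau+2)^2$ appears and cancels in the proof of the previous theorem.
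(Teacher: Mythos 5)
Your proposal is correct and follows essentially the same route as the paper: decompose the algorithm into the filtered unit view $T$ followed by the $\omega$ per-level discrete Gaussian mechanisms, bound each level's sensitivity via the $1$-stability of $f$ and $g_i$, apply Lemma~\ref{lem:vector-discrete-gaussian-satisfies-zcdp} and sequential composition, and then account for the $2$-stability of the unit view. If anything, your constant bookkeeping is cleaner than the paper's: you correctly obtain $\rho_i/4$-zCDP per level with respect to unit-dataframe neighbors and recover $\sum_i\rho_i$ via the $b^2=4$ factor of Lemma~\ref{lem:stab_cdp} cancelling $\Delta^2=4$, whereas the paper's written intermediate bounds ($\rho_i/2$ per level and a target of $\rho/2$ lifted through a $2$-stable view) do not multiply out to $\rho$ under the quadratic group-privacy scaling it itself states.
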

\begin{proof}
Let $P$ and $P'$ be neighboring databases that differ in the presence or absence of one person. Let $U$ and $U'$ denote the corresponding unit dataframes.
We want to show that for all $\alpha \in (1, \infty)$,
    \[    D_{\alpha}(M(U) \| M(U')) \le \alpha \cdot \rho/2.\]
    where $M$ is Algorithm~\ref{alg:phsafe-unit-count}. 
    
Since the unit view has a stability of 2, the result then follows.

We can decompose the algorithm $M$ into a transformation function $T$ followed by a sequential composition of $\omega$ zCDP mechanisms $M_1, \ldots, M_\omega$ that take as input the transformed unit data frames output by $T$ and output the statistics for population group level $i$. 

Let $J$ and $J'$ be the result of transformations $T$ on $U$ and $U'$, respectively. Since the mappings, $f$ and $g_i$ are 1-stable transformations, the  L2 distance in the query defined in Lines~\ref{line:denom-query-start}-\ref{line:denom-query-end} is 1. It follows that for all $i \in [1, \omega], \ D_{\alpha}(M_i(J) \| M_i(J')) \le \alpha \cdot \rho_i/2$ by Lemma~\ref{lem:vector-discrete-gaussian-satisfies-zcdp}. Then, by sequential composition, we have $D_{\alpha}(\{M_i(J)\}_i \| \{M_i(J')\}_i) \le \alpha \cdot \sum_i\rho_i/2$.
Finally, since $T$ is a filter, its stability is $1$ so we get: \[D_{\alpha}(\{M_i\}_i \circ T (U) \| \{M_i\}_i \circ T (U')) \le \alpha \cdot \sum_i\rho_i/2.\] 
\end{proof}

\subsection{Converting to Bounded Differential Privacy}
\label{sec:bounded-dp}
Here, we convert the previous privacy guarantees from unbounded zCDP to bounded zCDP. While all unbounded $\rho$-zCDP mechanisms also satisfy bounded $4\rho$-zCDP, we demonstrate here that Algorithm~\ref{alg:phsafe-phjoin} and Algorithm~\ref{alg:phsafe-unit-count} instead satisfy bounded $2\rho$-zCDP.

\begin{theorem} \label{thm:bounded_zcdp_alg1}
Algorithm \ref{alg:phsafe-phjoin} satisfies bounded $\rho = \sum_{i \in [1,\omega]} 2\rho_i$-zCDP.
\end{theorem}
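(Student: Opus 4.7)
The plan is to follow the structure of the unbounded-zCDP proof for Algorithm~\ref{alg:phsafe-phjoin}, re-analyzing the L2 sensitivity of the composed query $q \circ T$ under the bounded-neighbor definition. Let $(P, U)$ and $(P', U')$ be bounded neighbors at the base person dataframe, differing in a single record $r \mapsto r'$. Introducing the intermediate $(\tilde P, \tilde U)$ obtained by removing $r$, each of the transitions $(P, U) \to (\tilde P, \tilde U)$ and $(\tilde P, \tilde U) \to (P', U')$ is an unbounded neighbor pair, and Lemma~\ref{lem:trunc-and-join-stability} applied to each step gives $|T(P,U) \ominus T(\tilde P, \tilde U)| \le 2\tau+2$ and $|T(\tilde P, \tilde U) \ominus T(P',U')| \le 2\tau+2$, so by the triangle inequality $|T(P,U) \ominus T(P',U')| \le 2(2\tau+2)$.

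The key step is to show that this symmetric-difference bound translates to an L2 sensitivity of at most $\sqrt{2}(2\tau+2)$ for $q \circ T$, rather than the $2(2\tau+2)$ that the triangle inequality alone gives. Let $a = |T(P,U) \setminus T(P',U')|$ and $b = |T(P',U') \setminus T(P,U)|$. Because each joined record contributes a one-hot indicator to $q$, the squared L2 norm $\|q(T(P,U)) - q(T(P',U'))\|_2^2 = \sum_c (a_c - b_c)^2$ is at most $\sum_c a_c^2 + \sum_c b_c^2 \le a^2 + b^2$, maximized when the removed and added records each concentrate in a single output cell. Bounded neighbors at the base preserve dataframe cardinality, and truncate-and-join preserves record counts up to truncation edges, so $|T(P,U)| = |T(P',U')|$ and hence $a = b$. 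Combined with $a + b \le 2(2\tau+2)$, this forces $a = b \le 2\tau+2$ and therefore $\|q(T(P,U)) - q(T(P',U'))\|_2 \le \sqrt{a^2+b^2} \le \sqrt{2}(2\tau+2)$.

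With this bounded L2 sensitivity, Lemma~\ref{lem:vector-discrete-gaussian-satisfies-zcdp} applied at noise scale $\Delta = 2\tau+2$ yields that each $M_i$ satisfies bounded $(\sqrt{2}(2\tau+2))^2 \rho_i / (2\tau+2)^2 = 2\rho_i$-zCDP. Sequential composition (Lemma~\ref{lem:sequential-composition-zcdp}) across the $\omega$ population group levels then delivers the claimed bounded $\sum_i 2\rho_i = 2\rho$-zCDP guarantee for Algorithm~\ref{alg:phsafe-phjoin}. The main obstacle is rigorously justifying the balance $a = b$ against truncation-boundary effects: when the changed record's MAFID shifts and an affected household's size straddles the threshold $\tau$, the joined-dataframe sizes can differ by a small constant. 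A case analysis on whether the MAFID is preserved, and on whether the affected households exceed $\tau$ persons, is required to confirm that any residual imbalance contributes only a constant to $a^2 + b^2$ that is absorbed into the leading $\sqrt{2}(2\tau+2)$ bound.
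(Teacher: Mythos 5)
Your overall architecture matches the paper's: decompose the bounded change into an unbounded deletion followed by an unbounded insertion, establish that the composed query $q \circ T$ has L2 sensitivity at most $\sqrt{2}(2\tau+2)$ under bounded neighbors, then invoke Lemma~\ref{lem:vector-discrete-gaussian-satisfies-zcdp} with $\Delta = 2\tau+2$ to obtain $2\rho_i$-zCDP per population group level and finish with sequential composition. The last two steps are identical to the paper's and are correct as you state them.

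The gap is exactly where you suspect it, and in the form you have set it up it is not a removable technicality. Your route to $\sqrt{2}(2\tau+2)$ needs both $a+b \le 2(2\tau+2)$ and $a = b$; the first follows from Lemma~\ref{lem:trunc-and-join-stability} applied twice with the triangle inequality, but the second is false. The cardinality claim $|T(P,U)| = |T(P',U')|$ fails whenever the changed record alters universe membership (the per-table filters act on Age, Relationship, and HouseholdType), moves between households whose sizes sit on opposite sides of $\tau$, or flips a household's unit attributes so that the household leaves the unit filter entirely, in which case up to $\tau$ joined records vanish with no compensating additions. Nor is the imbalance ``absorbed'': since $a^2+b^2 = \tfrac{1}{2}\bigl((a+b)^2 + (a-b)^2\bigr)$, any nonzero $a-b$ coexisting with $a+b$ near $2(2\tau+2)$ pushes $\sqrt{a^2+b^2}$ strictly above $\sqrt{2}(2\tau+2)$, so you would need a joint bound on $a+b$ and $a-b$ that the symmetric-difference bookkeeping alone does not supply. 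The paper closes this step differently: it bounds the two directions separately, arguing that the inserted record can increase the count vector (in a single cell) by at most $2\tau+2$ and the deleted record can decrease it by at most $2\tau+2$, i.e., it bounds $a \le 2\tau+2$ and $b \le 2\tau+2$ individually by attributing all additions to one unbounded step and all removals to the other, rather than bounding only their sum. To repair your version, derive per-step bounds $a \le 2\tau+2$ and $b \le 2\tau+2$ directly from the stability analysis of Algorithm~\ref{alg:trunc-and-join}, instead of routing through cardinality preservation.
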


\begin{proof}
Let $P$ and $P'$ be bounded neighboring databases, i.e., they differ in an arbitrary change in one person's record. Let $U$ and $U'$ denote the corresponding unit dataframes.
We want to show that for all $\alpha \in (1, \infty)$, 
    \[    D_{\alpha}(M(P, U) \| M(P', U')) \le \alpha \cdot 2\rho\]
    where $M$ is Algorithm~\ref{alg:phsafe-phjoin}. 

When analyzing under bounded zCDP, it is sufficient to consider the deletion of a single individual’s record and the addition of another arbitrary record. We can decompose the algorithm $M$ into a transformation function $T$ followed by a sequential composition of $\omega$ zCDP mechanisms $M_1, \ldots, M_\omega$ that take as input the joined dataframes output by $T$ and output the statistics for population group level $i$.  

By Lemma~\ref{lem:trunc-and-join-stability}, Algorithm \ref{alg:trunc-and-join} is a $(2\tau + 2)$-stable transformation. Let $J$ and $J'$ be the result of transformations $T$ on $(P, U)$ and $(P', U')$, respectively. Since Algorithm~\ref{alg:trunc-and-join} is a $(2\tau +2)$-stable transformation and the mappings $f$ and $g_i$ are 1-stable transformations, the composition of all the transformations is $(2\tau +2)$-stable. Therefore, the maximum difference in the query defined in Lines~\ref{line:query-start}-\ref{line:query-end} for a single cell due to adding a record is an increase by $(2\tau +2)$. Likewise, for removing a record, the maximum difference in the query defined in Lines~\ref{line:query-start}-\ref{line:query-end} is a decrease in a single cell by $(2\tau +2)$. This results in a maximum L2 distance of $\sqrt{2(2\tau +2)^2}= \sqrt{2}(2\tau +2)$.

It follows that, given a privacy-loss parameter of $\frac{\rho_i}{(2\tau +2)^2}$, for all $i \in [1, \omega]$, we have $\D_{\alpha}(M_i(J) \| M_i(J')) \le \alpha \cdot 2\rho_i$ by Lemma \ref{lem:vector-discrete-gaussian-satisfies-zcdp}. Then, by sequential composition, we have $D_{\alpha}(\{M_i(J)\}_i \| \{M_i(J')\}_i) \le \alpha \cdot \sum_i2\rho_i =\alpha \cdot 2\rho$.
\end{proof}

For completeness, we also give privacy guarantees for Algorithm~\ref{alg:phsafe-unit-count}. The proof is straightforward, as this algorithm does not involve a join operator.

\begin{theorem} \label{thm:bounded_zcdp_alg2}
Algorithm \ref{alg:phsafe-unit-count} satisfies bounded $\rho =\sum_{i \in [1,\omega]} 2\rho_i$-zCDP.
\end{theorem}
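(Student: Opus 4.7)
The plan is to closely mirror the proof of Theorem \ref{thm:bounded_zcdp_alg1}, taking advantage of the fact that Algorithm \ref{alg:phsafe-unit-count} is simpler because it lacks a truncate-and-join step. I would begin by fixing bounded-neighboring base person dataframes and letting $U, U'$ denote the corresponding unit dataframes, noting, as in the Theorem \ref{thm:bounded_zcdp_alg1} proof, that it suffices to analyze a bounded change as the deletion of one individual's record followed by the addition of another arbitrary record. I would then decompose $M$ into a transformation $T$ (the filter on Line 2 of Algorithm \ref{alg:phsafe-unit-count}) followed by the sequential composition of $\omega$ discrete-Gaussian mechanisms $M_1, \ldots, M_\omega$, one per population group level.

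The main technical step is the stability and sensitivity bookkeeping for the query in Lines \ref{line:denom-query-start}--\ref{line:denom-query-end}. The unit view derivation from the base person dataframe is $2$-stable (as explained in Section \ref{sec:privacy-analysis}), while the filter $T$ and the mappings $f$ and $g_i$ used inside \textsc{VectorizePopulationGroup} are each $1$-stable, so the composition from the base person dataframe through to the pre-noise count vector is $2$-stable. Consequently, the addition of a single person record can increase the count in a single cell by at most $2$, and the removal of a single person record can decrease the count in a single cell by at most $2$. Placing these two worst cases in two distinct cells then yields a bounded L2 sensitivity of at most $\sqrt{2 \cdot 2^2} = 2\sqrt{2}$.

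Feeding $s = 2\sqrt{2}$, $\Delta = 2$ (the stability factor passed to \textsc{VectorDiscreteGaussian} on Line \ref{line:denom-count}), and privacy parameter $\rho_i$ into Lemma \ref{lem:vector-discrete-gaussian-satisfies-zcdp} yields $s^2 \rho_i / \Delta^2 = 2 \rho_i$-zCDP for each $M_i$ under bounded neighbors. Sequential composition (Lemma \ref{lem:sequential-composition-zcdp}) across the $\omega$ population group levels then gives the advertised $\sum_i 2\rho_i = \rho$-bounded zCDP guarantee.

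The only subtle step I anticipate is justifying that the worst-case L2 sensitivity is the concentrated value $2\sqrt{2}$ rather than the more spread-out value $2$ one might write down by imagining four affected cells each changing by $\pm 1$. The concentration is possible because the two unit-record modifications triggered by the addition (and likewise the two triggered by the removal) can fall into a common cell, producing a single-cell change of magnitude $2$. Since this is the same concentration phenomenon already handled in the proof of Theorem \ref{thm:bounded_zcdp_alg1}, no genuinely new technical difficulty arises---the proof is essentially a direct adaptation with the stability constant shrunk from $2\tau + 2$ to $2$.
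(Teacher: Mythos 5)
Your proposal is correct and follows essentially the same route as the paper's own proof: decompose into the filter plus $\omega$ per-level discrete Gaussian mechanisms, use the $2$-stability of the unit view together with the $1$-stable maps to get a bounded L2 sensitivity of $2\sqrt{2}$ (one cell up by $2$, one cell down by $2$), apply Lemma~\ref{lem:vector-discrete-gaussian-satisfies-zcdp} to get $2\rho_i$-zCDP per level, and finish with sequential composition.
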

\begin{proof}
Let $P$ and $P'$ be bounded neighboring databases that differ in an arbitrary change in one person's record. Let $U$ and $U'$ denote the corresponding unit dataframes.
We want to show that for all $\alpha \in (1, \infty)$,
    \[    D_{\alpha}(M(U) \| M(U')) \le \alpha \cdot 2\rho.\]
where $M$ is Algorithm~\ref{alg:phsafe-unit-count}. 
    
When analyzing under bounded zCDP it is sufficient to consider the deletion of a single individual’s record and the addition of another arbitrary record. We can decompose the algorithm $M$ into a transformation function $T$ followed by a sequential composition of $\omega$ zCDP mechanisms $M_1, \ldots, M_\omega$ that take as input the transformed unit data frames output by $T$ and output the statistics for population group level $i$. 

Let $J$ and $J'$ be the result of transformations $T$ on $U$ and $U'$, respectively. Since the unit dataframe is a $2$-stable transformation and the mappings $f$ and $g_i$ are 1-stable transformations, the composition of all the transformations is $2$-stable. Therefore, the maximum difference in the query defined in Lines~\ref{line:denom-query-start}-\ref{line:denom-query-end} for a single population group due to adding a record is an increase by $2$. Likewise, for removing a record, the maximum difference in the query defined in Lines~\ref{line:denom-query-start}-\ref{line:denom-query-end} is a decrease in a single population group by $2$. This results in a maximum L2 distance of $\sqrt{2(2)^2}= \sqrt{2}(2)$. 

It follows that, given a privacy parameter of $\frac{\rho_i}{4}$, for all $i \in [1, \omega]$, we have $\D_{\alpha}(M_i(J) \| M_i(J')) \le \alpha \cdot 2\rho_i$ by Lemma \ref{lem:vector-discrete-gaussian-satisfies-zcdp}. Then, by sequential composition, we have $D_{\alpha}(\{M_i(J)\}_i \| \{M_i(J')\}_i) \le \alpha \cdot \sum_i 2\rho_i =\alpha \cdot 2\rho$.

\end{proof}

\section{Implementation of PHSafe}\label{sec:implementation}

The PHSafe algorithm pseudocode from this paper and the implementation of PHSafe differ in some details. This section highlights a selection of implementation differences, focusing on the aspects relevant to the privacy analysis of PHSafe. In particular, we argue why each of these differences do not materially impact the proven privacy-loss budgets from Section \ref{sec:privacy-analysis}.

\subsection{Input Validation}\label{sec:validation}

PHSafe validates its inputs to ensure that they conform to expected schema specifications and that they are internally consistent. Input validation is necessary for successful deployment of large-scale systems like the 2020 DAS. However, this validation process does not fit into the differential privacy framework. For instance, input validation errors may necessitate edits to the confidential input data or require a re-run of the program. Such activities are not captured by the privacy guarantees of the algorithm. We view the privacy risks associated with input validation as negligible since the PHSafe program is being deployed by a trusted curator, the Census Bureau. Hence, potential risks of input validation fall outside the assumed operational threat model and have no material impact on the privacy loss incurred by a successful production run of the algorithm.

\subsection{Tumult Analytics}\label{sec:analytics}

PHSafe is implemented using Tumult Analytics \cite{berghel2022tumult}, a Python library for writing and executing differentially private queries. Tumult Analytics automates stability and noise scale calculations, avoiding potential pitfalls with manually calibrating the noise distributions. All access to the private data is managed through a Tumult Analytics \texttt{Session}, which tracks and limits the total privacy loss of computations on the sensitive data. The \texttt{Session} initializes all transformations and measurements performed on the private data. PHSafe constructs a \texttt{Session} with: 
\begin{itemize}
    \item The total privacy-loss budget for all queries made on the private dataset.
    \item The private datasets of person and unit records.
    \item The privacy definition for the program (e.g., zCDP).
    \item Each private dataset's neighboring definition. The person dataset defines neighbors with respect to the addition/removal of any single record from the dataset. The household dataset defines neighbors with respect to the addition/removal of two records from the dataset. These setting reflect the use of unbounded neighbors within the PHSafe code implementation.
\end{itemize}

The pseudocode of PHSafe describes an algorithm that is re-run for each table it produces. The PHSafe program encodes the processing of all tables into a single run so that all processes can be contained within a single \texttt{Session}. This distinction does not change the privacy analysis of PHSafe.

\subsubsection{Private Joins}
Algorithm \ref{alg:trunc-and-join}, the truncate and join operator, is implemented as a Tumult Analytics \texttt{join\_private} operation. Truncation strategies for both the left (persons) and right (units) dataframes are specified as part of the \texttt{join\_private} operation.

The \texttt{join\_private} in PHSafe uses Tumult Analytics' \texttt{DropExcess($\tau$)} strategy for the left dataframe. This strategy keeps $\tau$ records for each join key and drops the rest, as described in the pseudocode. To determine which records are kept, the records are ordered by a hash of the record value\footnote{Additional steps are taken to ensure identical records are hashed to different values.}.

The \texttt{join\_private} uses the \texttt{DropNonUnique} strategy for the right dataframe, which has the behavior described in Algorithm \ref{alg:trunc-and-join} -- it drops any records with non-unique join values. This is expected to be a no-op in practice (all units have unique IDs), and it ensures the tightest possible stability analysis in this scenario, as described in Section \ref{sec:stability}.

\subsection{Preprocessing}

The pseudocode for PHSafe assumed simplified input schemas compared to the actual inputs of the PHSafe implementation. The PHSafe implementation involves several preprocessing data transformations. However, these transformations do not change the stability analysis of the assumed input dataframes of the pseudocode. For example, the pseudocode excluded the group quarters population from its input data but, in reality, this exclusion requires a filter transformation on the 2020 CEF. Because filter transformations have a stability of one, the privacy analysis from Section \ref{sec:privacy-analysis} is unchanged.

\subsection{Postprocessing}

\subsubsection{PH5 and PH8 Numerator Calculations}

As previously discussed, PH5\_num and PH8\_num are not directly estimated as part of the differential privacy routine of PHSafe. They are created from the estimates PHSafe produces for PH4 and PH7, respectively. Under differential privacy, these activities are categorized as postprocessing and, therefore, do not incur additional privacy loss. 

\subsubsection{Statistical Postprocessing}
The output of the PHSafe program is passed through a statistical postprocessing algorithm before being handed off to the Decennial Tabulation System. The statistical postprocessing algorithm's goals are to ensure certain demographic reasonableness of the S-DHC and provide credible intervals. As part of this effort, statistical postprocessing ensures nonnegativity of counts and averages. In addition, it ensures both the numerator and denominator do not yield negative or infinite (extremely large) ratios. Finally, it provides credible intervals -- or the 90\% probability that the enumerated value is between the lower and upper end points of the interval. Credible intervals encompass the noise infused by disclosure avoidance; they do not provide estimates of expected error introduced by truncation or non-disclosure avoidance error (e.g., coverage error). Because the statistical postprocessing effort does not utilize the 2020 CEF, it fits the assumptions of the differential privacy postprocessing theorem. Therefore, it does not change the privacy analysis of the PHSafe algorithm.    

\subsection{Variance tracking}
In addition to noisy count estimates for population groups, the PHSafe implementation also outputs the variance of the noise distributions used to generate each estimate. One benefit of differential privacy as a disclosure avoidance technique is the ability to release properties such as variance without compromising the privacy guarantees. Knowing the variance and type of distribution (discrete Gaussian) is crucial to the success of statistical postprocessing.   

\subsection{Puerto Rico}\label{sec:pr}

As a matter of Census Bureau policy, the privacy-loss accounting for Puerto Rico is handled separately from the privacy loss of the United States. Hence, the Puerto Rico data records are processed in a separate execution of the PHSafe algorithm. One notable difference with PHSafe's processing of Puerto Rico is the lack of a Nation geography level and corresponding population group levels. Otherwise, PHSafe follows the same steps for Puerto Rico as it does for the United States.

\section{Parameters and Tuning}\label{sec:params}

Throughout this paper, we have referenced a number of parameters required by PHSafe. Parameters are tunable inputs that are necessary to fully determine the nature of the program (e.g., the noise scale employed in \textsc{VectorDiscreteGaussian}, the privacy-loss budgets for population group levels, and truncation thresholds). Parameter specification is a matter of policy. The Census Bureau's Data Stewardship Executive Policy (DSEP) committee approved all production parameters for the S-DHC. Parameter selection necessitates trade-offs, as many of these parameters are dependent on each other. To illustrate, we consider a fundamental relationship between the privacy-loss parameters and their corresponding margins of error with discrete Gaussian noise distributions.

\subsection{Error bounds}\label{sec:error-bounds}
PHSafe was designed to have predictable error due to noise infusion. Truncation in PHSafe introduces an additional source of error that is not predictable. Nonetheless, the Census Bureau uses margins of error to set accuracy targets for the noise infusion component of PHSafe. 

\begin{definition}
The 90\% MOE is half the width of the 90\% confidence interval.
\end{definition}

Since we only consider 90\% MOEs in this paper, we often write MOE without the 90\% qualifier. We begin by stating a portion of Proposition 25 from \cite{CanonneK2020}.

\begin{proposition}[Proposition 25 in \cite{CanonneK2020}]\label{lem:discrete-gaussian-bound}

For all $m \in \mathbb{Z}$ with $m \geq 1$, and for all $\sigma \in \mathbb{R}$ with $\sigma > 0$, $\Pr[X \geq m]_{X \leftarrow \mathcal{N}_{\mathbb{Z}}(\sigma^2)} \leq \Pr[X \geq m-1]_{X \leftarrow \mathcal{N}(\sigma^2)}$.
\end{proposition}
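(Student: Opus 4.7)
The plan is to compare the discrete Gaussian tail sum to a continuous Gaussian tail integral, using two standard ingredients: monotonicity of the Gaussian density on the positive real line, and a Poisson-summation-style lower bound on the discrete Gaussian normalizer.

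First I would unwind the definitions. Writing $\rho_\sigma(x) = e^{-x^2/2\sigma^2}$, the discrete Gaussian PMF is $\Pr[X = x] = \rho_\sigma(x)/Z$, where $Z = \sum_{y \in \mathbb{Z}} \rho_\sigma(y)$. The tail probability of interest is therefore $\Pr[X \geq m] = \sum_{x \geq m} \rho_\sigma(x)/Z$, while the continuous tail is $\Pr_{\mathcal{N}(0,\sigma^2)}[X \geq m-1] = \int_{m-1}^\infty \rho_\sigma(t)\,\dd t / (\sqrt{2\pi}\,\sigma)$. So it suffices to show two inequalities: a numerator bound $\sum_{x \geq m} \rho_\sigma(x) \leq \int_{m-1}^\infty \rho_\sigma(t)\,\dd t$ and a denominator bound $Z \geq \sqrt{2\pi}\,\sigma$.

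For the numerator bound, since $\rho_\sigma$ is decreasing on $[0,\infty)$ and $m \geq 1$ ensures every integer $x \geq m$ lies in this region, I can compare each term $\rho_\sigma(x)$ to $\int_{x-1}^{x} \rho_\sigma(t)\,\dd t$; telescoping the unit intervals $[m-1,m], [m,m+1], \ldots$ gives $\sum_{x \geq m} \rho_\sigma(x) \leq \int_{m-1}^\infty \rho_\sigma(t)\,\dd t$, which is exactly the needed step. For the denominator bound, I would invoke Poisson summation on the Gaussian: since the Fourier transform of $\rho_\sigma$ is a positive Gaussian, $\sum_{y \in \mathbb{Z}} \rho_\sigma(y) = \sqrt{2\pi}\,\sigma \sum_{k \in \mathbb{Z}} e^{-2\pi^2 \sigma^2 k^2} \geq \sqrt{2\pi}\,\sigma$, because the $k=0$ term alone contributes $1$ and every other term is positive.

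Combining the two bounds,
\begin{equation*}
\Pr[X \geq m] \;=\; \frac{\sum_{x \geq m} \rho_\sigma(x)}{Z} \;\leq\; \frac{\int_{m-1}^\infty \rho_\sigma(t)\,\dd t}{\sqrt{2\pi}\,\sigma} \;=\; \Pr_{\mathcal{N}(0,\sigma^2)}[X \geq m-1],
\end{equation*}
which is the claim. The main obstacle is really just having the Poisson summation identity at hand; the monotonicity/integral comparison is routine once the indexing (using $m-1$ rather than $m$ as the left endpoint) is chosen to exploit the fact that each term $\rho_\sigma(x)$ can be charged to the preceding unit interval. The hypothesis $m \geq 1$ is used precisely to keep the comparison in the regime where $\rho_\sigma$ is monotone decreasing.
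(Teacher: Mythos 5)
The paper does not prove this proposition itself; it imports it verbatim as Proposition 25 of \cite{CanonneK2020}, so there is no internal proof to compare against. Your argument is correct and is essentially the proof given in that reference: the term-by-term comparison $\rho_\sigma(x)\le\int_{x-1}^{x}\rho_\sigma(t)\,\mathrm{d}t$ (valid for $x\ge 1$ by monotonicity on $[0,\infty)$) handles the numerator, and the Poisson-summation bound $\sum_{y\in\mathbb{Z}}e^{-y^2/2\sigma^2}\ge\sqrt{2\pi}\,\sigma$ handles the normalizer, which together give exactly the claimed shift-by-one tail domination.
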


That is, discrete Gaussian distributions have tighter tails than their corresponding continuous Gaussian distributions. The following corollary reinterprets the tail bounds for real-valued numbers. 
\begin{corollary}
For all $m, \sigma \in \mathbb{R}$ with $x \geq 1$ and $\sigma > 0$,  $\Pr[X > x]_{X \leftarrow \mathcal{N}_{\mathbb{Z}}(\sigma^2)} \leq \Pr[X > \lfloor x \rfloor]_{X \leftarrow \mathcal{N}(\sigma^2)}$.
\end{corollary}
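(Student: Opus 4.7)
The plan is to reduce the real-valued tail statement to the integer-valued tail bound in Proposition~25 by passing through the floor of $x$, exploiting two elementary facts: the discrete Gaussian is supported on $\mathbb{Z}$, so strict inequalities against a real threshold collapse to non-strict inequalities against the next integer, while the continuous Gaussian is absolutely continuous, so strict and non-strict inequalities at any single point coincide.

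Concretely, I will proceed in three short steps. First, observe that since $X \leftarrow \mathcal{N}_{\mathbb{Z}}(\sigma^2)$ takes integer values, $\Pr[X > x] = \Pr[X \geq \lfloor x \rfloor + 1]$ for every real $x$. Second, set $m := \lfloor x \rfloor + 1$; the hypothesis $x \geq 1$ ensures $\lfloor x \rfloor \geq 1$, so $m \geq 2 \geq 1$ and Proposition~\ref{lem:discrete-gaussian-bound} applies, giving
\[
\Pr[X \geq \lfloor x \rfloor + 1]_{X \leftarrow \mathcal{N}_{\mathbb{Z}}(\sigma^2)} \;\leq\; \Pr[X \geq \lfloor x \rfloor]_{X \leftarrow \mathcal{N}(\sigma^2)}.
\]
Third, because the continuous Gaussian assigns measure zero to any single point, $\Pr[X \geq \lfloor x \rfloor]_{\mathcal{N}(\sigma^2)} = \Pr[X > \lfloor x \rfloor]_{\mathcal{N}(\sigma^2)}$. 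Chaining these three equalities/inequality yields the corollary.

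There is essentially no main obstacle here; this is a bookkeeping exercise in converting between strict/non-strict inequalities and between integer and real thresholds. The only point worth being slightly careful about is the range of $m$: one must verify that $m = \lfloor x \rfloor + 1$ is a legal input to Proposition~25 (i.e., an integer at least $1$), which follows immediately from $x \geq 1$. Note that the statement as written uses $m$ and $\sigma$ in the quantifier but $x$ in the body; I will treat $x$ as the free real parameter and drop the stray $m$, which appears to be a typographical artifact.
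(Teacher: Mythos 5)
Your proof is correct and is exactly the intended argument: the paper states this corollary without proof as an immediate ``reinterpretation'' of Proposition~25, and your three steps (collapse the strict real-threshold tail of the integer-valued variable to $\Pr[X \geq \lfloor x \rfloor + 1]$, apply Proposition~25 with $m = \lfloor x \rfloor + 1$, and use atomlessness of the continuous Gaussian) make that implicit reasoning explicit. You are also right that the quantifier ``for all $m, \sigma$'' with $x$ in the body is a typographical slip in the statement, properly read as quantifying over $x$.
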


Figure 2 of \cite{CanonneK2020} provides an intuitive visualization of these tail bounds.
Using the continuous Gaussian to upper bound MOE in the discrete Gaussian, it follows that the discrete Gaussian has $X \in [-\lfloor 1.64\sigma \rfloor, \lfloor 1.64\sigma \rfloor]$ with probability of at least 90\%.
That is, $MOE \leq \lfloor 1.64\sigma \rfloor$. 

Recall that $\sigma^2 = \frac{\Delta^2}{2\rho}$ in Algorithm \ref{alg:vector-base-discrete-gaussian}. Combining these two equations and solving for $\rho$, yields the following result.

\begin{corollary}\label{cor:dgauss-rho-from-moe}
For any $\Delta > 0$, the base discrete Gaussian mechanism run with $(\rho = \frac{1.3448\Delta^2}{\lfloor MOE \rfloor^2}, \Delta)$ has 90\% margins of error of at most $MOE$ in each vector component.
\end{corollary}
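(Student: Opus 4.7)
The plan is to invert the relationship between the noise scale and the privacy-loss parameter, then appeal to the preceding corollary componentwise. First I would recall that in Algorithm~\ref{alg:vector-base-discrete-gaussian}, the added noise $y$ is drawn from $\mathcal{N}^n_{\mathbb{Z}}(\Delta^2/(2\rho))$, meaning each of the $n$ components is independently distributed as $\mathcal{N}_{\mathbb{Z}}(\sigma^2)$ with $\sigma^2 = \Delta^2/(2\rho)$. Because the components are i.i.d., bounding the 90\% MOE per component reduces to bounding the MOE of a single discrete Gaussian of scale $\sigma$, and the preceding corollary gives exactly this: $MOE \le \lfloor 1.64\sigma \rfloor$.

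Next I would substitute $\rho = 1.3448\Delta^2/\lfloor MOE \rfloor^2$ into $\sigma^2 = \Delta^2/(2\rho)$, observing that $1.3448 = 1.64^2/2$ so the $\Delta^2$ factors cancel cleanly. This yields $\sigma^2 = \lfloor MOE \rfloor^2/1.64^2$, hence $\sigma = \lfloor MOE \rfloor/1.64$ and $1.64\sigma = \lfloor MOE \rfloor$. Feeding this back into the bound gives a componentwise MOE of at most $\lfloor 1.64\sigma \rfloor = \lfloor \lfloor MOE \rfloor \rfloor = \lfloor MOE \rfloor \le MOE$, which is the claim.

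There is essentially no obstacle here; the corollary is just the algebraic inversion of the preceding corollary's MOE bound, and the floor function is absorbed harmlessly by the idempotence $\lfloor \lfloor MOE \rfloor \rfloor = \lfloor MOE \rfloor$. The only modest subtlety worth a brief remark is that the preceding corollary requires $x \ge 1$ (here $x = 1.64\sigma = \lfloor MOE \rfloor$), so one should note that the statement is only meaningful when the target $MOE \ge 1$, which is implicit in the use of $\lfloor MOE \rfloor$ as a positive denominator.
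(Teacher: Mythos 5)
Your proposal is correct and matches the paper's own (very terse) derivation: the paper likewise combines the bound $MOE \le \lfloor 1.64\sigma\rfloor$ with $\sigma^2 = \Delta^2/(2\rho)$ and solves for $\rho$, using $1.3448 = 1.64^2/2$. Your extra remark about the implicit requirement $MOE \ge 1$ is a reasonable observation the paper leaves unstated, but it does not change the argument.
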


For any table using Algorithm \ref{alg:phsafe-phjoin}, a basis table cell in population group level $i$ has an MOE of $\left\lfloor 1.64\sqrt{\frac{(2\tau + 2)^2}{2\rho_i}}\right\rfloor$.

For any table using Algorithm \ref{alg:phsafe-unit-count}, a basis table cell in population group level $i$ has an MOE of $\left\lfloor 1.64\sqrt{\frac{2}{\rho_i}}\right\rfloor$.

%%% Local Variables:
%%% mode: latex
%%% TeX-master: "main"
%%% End:

\subsection{Parameter Identification, Trade-offs, and Outcomes}

Before delving into specific parameters, we give an overview of some of the methods involved in navigating the general trade-offs associated with setting parameters. Firstly, parameters tend to impact some combination of these three aspects: data confidentiality, data accuracy, and data availability. Data availability refers to the volume of tabular statistics released. Data confidentiality reflects privacy as measured by the privacy-loss budgets of the algorithm. For our purposes, data accuracy is measured by margins of error on the tabular statistics. For example, excluding population group level $i$ would reduce data availability but improve privacy, since the privacy-loss budget $\rho_i$  is no longer necessary for each table. To aid in the understanding of these trade-offs, we relied on a combination of tangible tools and theoretical analyses.  

We created an analysis tool, PHExplore, to provide hands-on experience exploring these trade-offs. We provide a brief summary of this tool in the following section. Then, we highlight specific parameters and the critical decisions made by the DSEP committee based on recommendations from subject-matter expert or DAS scientists. Several of the subject-matter expert recommendations were influenced by interaction with the PHExplore tool.

\subsection{Parameter Tuning Using the PHExplore}

PHExplore is an easy-to-use interactive decision support tool implemented in the Microsoft Excel program and developed to facilitate conversations between subject-matter experts, disclosure avoidance scientists, and ultimately, the DSEP committee. 

The tool allowed users to interactively specify: 
\begin{itemize}
\item The set of geography levels and iteration levels that constitute the universe of population groups for which statistics are tabulated. 
\item The truncation thresholds that limit the maximum number of people per household.
\item Privacy-loss budgets.
\end{itemize}
Based on these parameters, the tool computed the expected MOEs of the algorithm. The computations were performed using analytical formulae for expected error of noise mechanisms employed in the PHSafe algorithm. 

In the next section, we describe  some of the key parameters considered and the decision process used to set these parameters. 

\subsubsection{Parameter Selection}

\noindent \emph{\textbf{Population group levels}}: As a reminder, a population group level is defined by a geographic summary level, such as Nation or State, and an iteration level (i.e., Unattributed, A-G, or H-I). PHExplore helped subject-matter experts grasp the impact that adding or removing levels would have on the privacy-loss budget, but they had to weigh that against the value of having publishable statistics at each given level. They also gathered feedback from data users on these topics. Given privacy-loss budget constraints, they ultimately settled on the levels referenced in Table \ref{tab:moe-targets}.

\begin{table}[t]
    \centering
    \resizebox{\columnwidth}{!}{
    \begin{tabular}{c c c c c c}
    \toprule
    Table & Truncation Threshold & Population Group Level & MOE Target &  Unbounded Privacy Loss & Bounded Privacy Loss \\
    \midrule
    \multirow{6}{*}{PH1\_num} & \multirow{6}{*}{10} & (Nation, Unattributed) & 500 & 0.002619 & 0.005238 \\
                              &                     & (Nation, A-G)          & 500 & 0.002619 & 0.005238 \\
                              &                     & (Nation, H-I)          & 500 & 0.002619 & 0.005238 \\
                              &                     & (State, Unattributed)  & 200 & 0.016371 & 0.032742 \\
                              &                     & (State, A-G)           & 68 & 0.141622 & 0.283244 \\
                              &                     & (State, H-I)           & 200 & 0.016371 & 0.032742 \\
    \midrule
    \multirow{6}{*}{PH1\_denom} & \multirow{6}{*}{NA} & (Nation, Unattributed) & 500 & 0.000022 & 0.000044 \\
                                &                     & (Nation, A-G)          & 500 & 0.000022 & 0.000044 \\
                                &                     & (Nation, H-I)          & 500 & 0.000022 & 0.000044 \\
                                &                     & (State, Unattributed)  & 200 & 0.000135 & 0.00027  \\
                                &                     & (State, A-G)           & 68 & 0.00117 & 0.00234  \\
                                &                     & (State, H-I)           & 200 & 0.000135 & 0.00027  \\
    \midrule
    \multirow{2}{*}{PH2} & \multirow{2}{*}{10} & (Nation, Unattributed) & 500 & 0.002619 & 0.005238 \\                                    
                         &                      & (State, Unattributed)  & 200 & 0.016371 & 0.032742 \\
    \midrule
    \multirow{6}{*}{PH3} & \multirow{6}{*}{6} & (Nation, Unattributed) & 500 & 0.001061 & 0.002122 \\
                              &                     & (Nation, A-G)          & 500 & 0.001061 & 0.002122 \\
                              &                     & (Nation, H-I)          & 500 & 0.001061 & 0.002122 \\
                              &                     & (State, Unattributed)  & 200 & 0.006630 & 0.01326 \\
                              &                     & (State, A-G)           & 20 & 0.662976 & 1.325952 \\
                              &                     & (State, H-I)           & 200 & 0.006630 & 0.01326 \\
    \midrule
    \multirow{6}{*}{PH4} & \multirow{6}{*}{10} & (Nation, Unattributed) & 500 & 0.002619 & 0.005238 \\
                              &                     & (Nation, A-G)          & 500 & 0.002619 & 0.005238 \\
                              &                     & (Nation, H-I)          & 500 & 0.002619 & 0.005238 \\
                              &                     & (State, Unattributed)  & 200 & 0.016371 & 0.032742 \\
                              &                     & (State, A-G)           & 68 & 0.141622 & 0.283244 \\
                              &                     & (State, H-I)           & 200 & 0.016371 & 0.032742 \\
    \midrule
    \multirow{6}{*}{PH5\_denom} & \multirow{6}{*}{NA} & (Nation, Unattributed) & 500 & 0.000022 & 0.000044 \\
                                &                     & (Nation, A-G)          & 500 & 0.000022 & 0.000044 \\
                                &                     & (Nation, H-I)          & 500 & 0.000022 & 0.000044 \\
                                &                     & (State, Unattributed)  & 200 & 0.000135 & 0.00027  \\
                                &                     & (State, A-G)           & 68 & 0.00117 & 0.00234  \\
                                &                     & (State, H-I)           & 200 & 0.000135 & 0.00027  \\
    \midrule
    \multirow{2}{*}{PH6} & \multirow{2}{*}{6}  & (Nation, Unattributed) & 500 & 0.001061 & 0.002122 \\
                         &                     & (State, Unattributed)  & 200 & 0.006630 & 0.01326 \\
    \midrule
    \multirow{6}{*}{PH7} & \multirow{6}{*}{10} & (Nation, Unattributed) & 500 & 0.002619 & 0.005238 \\
                              &                     & (Nation, A-G)          & 500 & 0.002619 & 0.005238 \\
                              &                     & (Nation, H-I)          & 500 & 0.002619 & 0.005238 \\
                              &                     & (State, Unattributed)  & 200 & 0.016371 & 0.032742 \\
                              &                     & (State, A-G)           & 68 & 0.141622 & 0.283244 \\
                              &                     & (State, H-I)           & 200 & 0.016371 & 0.032742 \\
    \midrule
    \multirow{6}{*}{PH8\_denom} & \multirow{6}{*}{NA} & (Nation, Unattributed) & 500 & 0.000022 & 0.000044 \\
                                &                     & (Nation, A-G)          & 500 & 0.000022 & 0.000044 \\
                                &                     & (Nation, H-I)          & 500 & 0.000022 & 0.000044 \\
                                &                     & (State, Unattributed)  & 200 & 0.000135 & 0.00027  \\
                                &                     & (State, A-G)           & 68 & 0.00117 & 0.00234  \\
                                &                     & (State, H-I)           & 200 & 0.000135 & 0.00027  \\
    \midrule
    \bottomrule
    \end{tabular}
    }
   \caption{MOE targets for the statistics released at different population group levels, along with the corresponding privacy loss (unbounded and bounded $\rho$-zCDP for discrete Gaussian). Due to the total population of the United States being published without noise, the bounded privacy-loss budgets in this table were stressed by Census Bureau staff in internal conversations and for presentation to DSEP, for purposes of interpreting the privacy guarantee.}
   \label{tab:moe-targets}
\end{table}

\noindent \emph{\textbf{MOE, $\rho$, and truncation}}:  PHExplore provided an interface for adjusting privacy loss and truncation thresholds to observe the impact on expected MOE as derived in Section \ref{sec:error-bounds}. The Census Bureau selected the MOEs and corresponding $\rho$s and truncation thresholds as displayed in Table \ref{tab:moe-targets} for the production run of PHSafe on the 2020 Census data.

\section{Conclusion}\label{sec:discussion}

In this paper, we presented the PHSafe algorithm, a differentially private algorithm for producing the S-DHC for the Census Bureau. We covered several key aspects of the algorithm. First, we provided a technical pseudocode description of the algorithm. Then, we covered the privacy properties of the algorithm. Next, we discussed the differences between the pseudocode and the implementation of the algorithm with Tumult Analytics. We also covered PHExplore's role in parameter tuning and the key algorithmic parameters set by Census Bureau policy.

\bibliographystyle{unsrt}
\bibliography{refs}

\end{document}